\newcommand{\argmax}{\mathop{\mathrm{argmax}}}
\newtheorem{lemma}{Lemma}
  \def\cC{{\mathcal{C}}}
  \def\cF{{\mathcal{F}}}
 \def\cN{{\mathcal{N}}}
 \def\argmax{\mathop{\mathrm{argmax}}}
 \def\btheta{{\pmb{\theta}}}
  \def\b0{{\pmb{0}}}
  \def\bc{{\mathbf{c}}}
  \def\bff{{\mathbf{f}}}
\def\bg{{\mathbf{g}}} \def\bh{{\mathbf{h}}} 
\def\bm{{\mathbf{m}}} \def\bn{{\mathbf{n}}} 
 \def\bq{{\mathbf{q}}} \def\br{{\mathbf{r}}}
\def\bs{{\mathbf{s}}}  
 \def\bw{{\mathbf{w}}} \def\bx{{\mathbf{x}}}
\def\by{{\mathbf{y}}} \def\bz{{\mathbf{z}}}
  \def\bC{{\mathbf{C}}}
\def\bD{{\mathbf{D}}}  \def\bF{{\mathbf{F}}}
 \def\bH{{\mathbf{H}}} \def\bI{{\mathbf{I}}}
  \def\bU{{\mathbf{U}}}
\def\bV{{\mathbf{V}}}  \def\bX{{\mathbf{X}}}
\def\bY{{\mathbf{Y}}} \def\bZ{{\mathbf{Z}}} 
\def\bLambda{{\mathbf{\Lambda}}}
  \def\mC{{\mathbb{C}}}
\begin{document}
\linespread{1.065}
\scalefont{.91}

\title{Using Channel Output Feedback to Increase Throughput in Hybrid-ARQ
\thanks{The material in this paper was presented in part at the \textit{IEEE
Military Communications Conference}, San Jose, CA, November 2010 and the \textit{IEEE
International Workshop on Signal Processing Advances in Wireless Communications}, San Francisco, CA, June 2011.
The authors are with the School of Electrical and Computer Engineering, Purdue University,
West Lafayette, IN 47907 USA (e-mail: magrawal@purdue.edu;
zchance@purdue.edu; djlove@ecn.purdue.edu; ragu@ecn.purdue.edu).}}
\author{Mayur Agrawal, Zachary Chance, David J. Love, and Venkataramanan Balakrishnan}

\maketitle

\begin{abstract}
Hybrid-ARQ protocols have become common in many packet transmission
systems due to their incorporation in various standards. Hybrid-ARQ combines
the normal automatic repeat request (ARQ) method with error correction codes to
increase reliability and throughput.  In this paper, we look at improving upon
this performance using feedback information from the destination, in particular,
using a powerful forward error correction~(FEC) code in conjunction with a
proposed linear feedback code for the Rayleigh block fading channels. The new
hybrid-ARQ scheme is initially developed for full received packet feedback in a
point-to-point link. It is then extended to various different multiple-antenna
scenarios (MISO/MIMO) with varying amounts of packet feedback information.
Simulations illustrate gains in throughput.
\end{abstract}

\begin{keywords}
hybrid-ARQ, additive Gaussian noise channels, channel output feedback, MIMO fading channel, concatenated coding
\end{keywords}

\section{Introduction}

The tremendous growth in demand for throughput in wireless networks warrants new design principles for
coding information at the lower layers. In recent years, packet-based
hybrid automatic repeat request (ARQ), which integrates forward error
correction~(FEC) coding with the traditional automatic repeat request protocol, has sparked
much interest. Any hybrid-ARQ scheme includes the transmission of an
acknowledgement~(ACK) or a negative-acknowledgement~(NACK) from the destination
to the source. Although not normally viewed this way, the feedback of ACK/NACK
can be seen as a form of \emph{channel output information~(COI)} indicating to
the source the `quality' of the channel output. Exploiting the
full potential of COI at the source for hybrid-ARQ schemes, however, has not
been explored in the literature. In fact most of the discussion about feedback
in wireless systems has been limited to the use of \emph{channel state
information~(CSI)} at the source \cite{love2,SuMa06, SaDi06}.

Research since the 1960s~\cite{Schal1, Schal2, butman} has long established the
utility of using COI at the source to increase reliability in
additive white Gaussian noise~(AWGN) channels. The gain in reliability scales very fast
in blocklength and relies on simple, linear coding schemes. These schemes
achieve a doubly exponential decay in the probability of error as a function of
the number of packet retransmissions; this is in stark contrast to open loop
systems~(without COI) that can achieve only singly exponential decay in
the probability of error. Most of the the literature for COI assumes an
information-theoretic perspective for analysis. In this work, however, we take a
more signal processing approach to COI. In particular, we explore the
efficacy of including COI to increase the throughput of a practical
hybrid-ARQ scheme.

As is commonly done in hybrid-ARQ literature, we assume that the transmissions
will take place over a \emph{block fading channel} where the fading
characteristics of the channel are assumed to be constant over the transmission
of a packet (e.g., \cite{Caire1, Caire3, Caire 4, Love1}).  In the context of
this channel, there will be two main types of side-information that we consider
to be available at the source: CSI and COI. CSI is commonly known as a
complex-valued quantity that represents the spatial alignment of the channel.
This can be either outdated, where the source has access to outdated values of
CSI (i.e., the channel states for previous blocks), or current, where the source
has access to the present value of CSI (i.e., the channel state for the current
block). Outdated CSI is commonly obtained through feedback techniques from the
destination. Current CSI can be obtained through numerous methods such as
exploiting channel reciprocity through time division duplexing. In fact, most recent 
standards \cite{bruno} and technologies like multi-user MIMO, network
MIMO, and OFDM incur major penalties in performance without the availability of
current CSI~\cite{SpPe04, StBa04, PaDa08}. The most common form of COI is simply
the past received packet at the destination; we will be using this
definition as the destination incurs no processing before feeding back the
received packet. The main focus of the paper will be the integration of
COI in to a hybrid-ARQ framework.

Hybrid-ARQ improves the reliability of the transmission link by jointly
encoding or decoding the information symbols across multiple received packets.
Specifically, there are three ways \cite{Lin} in which hybrid-ARQ schemes are implemented:
\begin{itemize}
  \item \emph{Type I}: Packets are encoded using a fixed-rate FEC code, and
  both information and parity symbols are sent to the destination. In
  the event that the destination is not able to decode the packet, it
  rejects~(NACK) the current transmission and requests the retransmission of the same
  packet from the source. Subsequent retransmissions from the
  source are merely a repetition of the first transmission.
  \item \emph{Type II}: In this case, the destination has a buffer to
  store previous unsuccessfully transmitted packets. The first packet sent
  consists of the FEC code and each subsequent retransmission consists of only
  the parity bits~(\emph{incremental redundancy}) to help the receiver at the
  destination jointly decode across many retransmissions of the same packet.

  \item \emph{Type III}: This method is similar to Type II with one major
  difference. In Type III, every retransmission is self decodable, e.g.,
  \emph{Chase combining}\cite{Chase1}. Therefore the destination has the
  flexibility to either combine the current retransmission with all the
  previously received retransmissions or use only the current packet for
  decoding.
\end{itemize}

The first mention of hybrid-ARQ techniques can be traced back to papers from the
1960s (e.g., \cite{WozHor2}).  However, most attention to this
protocol has been given during the late 1990s and early 2000s.  Throughput and
delay analyses were done for the Gaussian collision channel in
\cite{Caire1,Caire2,Caire3,Caire4}.  These topics were also investigated for
wireless multicast in \cite{Love1} and for block fading channels with modulation
constraints in \cite{Val}. The hybrid-ARQ technique has been looked at when
using many different types of FEC codes including turbo codes
\cite{Jung, Rowitch, Aci}, convolutional codes \cite{Caire3,Hag}, LDPC codes
\cite{Caire4,Ha}, and Raptor codes \cite{Varn,Lee}.  The performance analysis of
a hybrid-ARQ scheme with CSI\cite{SuMa06, SaDi06, SuDi06, SuDi07, KiKa09} and
without CSI at the source\cite{AsPo10} has also been studied in the literature.
In addition, different ways to utilize the feedback channel have been
investigated in \cite{Matsu,Shea}.

In this work, however, we explore the advantages of combining conventional CSI
feedback \emph{with COI feedback}.  As we noted earlier,
the potential of COI feedback in hybrid-ARQ has not been properly explored. The
method we introduce is a variation of Type III hybrid-ARQ scheme that
incorporates the use of COI feedback. In the event that no COI feedback is
available, our proposed scheme simply reduces to the regular Chase combining in
which packets are repeated for retransmission and the destination combines the
received packets using \emph{maximum ratio combining} (MRC).  However when COI
feedback is available, we look at implementing a linear feedback code that is a
generalization of Chase combining to increase the performance of the packet transmission system.  A linear
feedback code is simply a transmission scheme in which the transmit value is a
strictly linear function of the message to be sent and the feedback
side-information \cite{ZaDa11}. We show that such codes provide advantages over
merely repeating the last packet, while offering simpler analysis and
implementation than conventional Type II incremental redundancy codes.

A relevant concern for the implementation of COI feedback techniques
is practicality as it requires (possibly) sending large amounts of data from
the destination back to the source.  However, we hope to address these
concerns by first studying the ideal scenarios (i.e., perfect COI feedback by
feeding back the full received packet) to illustrate what is \textit{theoretically} possible and
then extend these results to limited-resource cases (i.e., noisy COI feedback
and feeding back only parts of the received packet).  This allows us to
establish a trade-off in performance to allow for practical limitations on the
system. Furthermore, COI feedback techniques can be especially beneficial when
there is link asymmetry between the source and the destination. In other words,
the situations in which the reverse link can support much higher rates than
the forward link.

To accommodate the use of multiple-input single-output (MISO) and multiple-input
multiple-output~(MIMO) systems, we first construct the proposed scheme for the
simplest case of single-input single-output (SISO) transmission and then extend
the scheme to the case with multiple transmit antennas. Specifically, the scheme
is adapted for use with MISO and MIMO when current CSI is available at the
source and either perfect or noisy COI feedback is available.  It is also
adapted for MIMO when perfect COI and only outdated CSI is available at the source.

The paper is structured as follows.  In Section \ref{sec:sm}, a brief high-level
description of hybrid-ARQ is given to motivate the investigation into using more
feedback in a packet retransmission scheme.  In Section \ref{sec:lfc}, the
feedback scheme to be integrated into a hybrid-ARQ protocol is introduced for
SISO systems. We begin the section by describing the encoding process. It is
followed by a discussion of decoding; this involves two different cases -
systems with noiseless COI feedback and systems with noisy COI feedback. In
Section \ref{sec:mas}, the SISO scheme is extended to various multiple antenna
scenarios. In Section \ref{sec:harq}, the overall hybrid-ARQ system is discussed
in detail where now the COI feedback schemes created are integrated as a
generalization of Chase combining. Schemes that vary the amount of COI
feedback being sent to the source are also discussed.  In Section \ref{sec:sim},
throughput simulations are given to illustrate the performance of the proposed
hybrid-ARQ scheme versus other commonly used hybrid-ARQ schemes such as
\cite{HSDPA} and traditional Chase combining. Note that our comparison is with
the incremental redundancy ARQ scheme in \cite{HSDPA}~(not the standardized
system in general). It is not our intention to assume the same conditions
present in the transport channels as the ones discussed in \cite{HSDPA}.

\subsection*{Notation:}

The vectors~(matrices) are represented by lower~(upper) boldface letters while
scalars are represented by lower italicized letters. The operators $(\cdot)^T,
(\cdot)^*, \textrm{tr}(\cdot)$, and $\lVert \cdot \rVert$ denote the transpose,
conjugate transpose, trace, and Euclidean norm of a matrix/vector respectively.
The expectation of a random variable or matrix/vector is denoted by $E[\cdot]$.
The boldface letter $\bI$ represents the identity matrix.

\section{System Model}\label{sec:sm}

Consider using the SISO hybrid-ARQ transmission system in Fig. \ref{hybrid}
where there is one antenna available at the source and the destination. The goal
of the transmission scheme is to successfully send the binary information
packet, $\bm \in GF(2)^{L_{\rm{info}}}$, to the destination over a maximum of
$N$ packet retransmissions. $GF(2)$ denotes the Galois field with just two elements
$\{0,1\}$, and $L_{\rm{info}}$ denotes the total number of information bits.
Transmission is accomplished by first encoding the information packet using a
rate $L_{\rm{info}}/L_{\rm{coded}}~(\textrm{where } L_{\rm{info}} \leq
L_{\rm{coded}})$ FEC code, producing a binary codeword of length
$L_{\rm{coded}}$ referred to as $\bc \in GF(2)^{L_{\rm{coded}}}$.  The codeword
is then modulated using a source constellation $\Theta[N]$ (e.g., QAM, QPSK,
etc.) to create a length $L$ packet of modulation symbols called $\btheta \in
\mathbb{C}^{1\times L}$.  Note that the source constellation $\Theta[N]$ is a
function of the maximum number of transmissions $N$. If $N$ is large for a fixed $L$ (i.e., $L/N$ is small), we
may decide to choose a denser constellation to achieve higher throughput. This
is then processed by a packet encoder that encapsulates most of the hybrid-ARQ
process. At this stage, the modulated symbols are further encoded
to generate the transmitted signal $\bx[k] \in \mathbb{C}^{1 \times L}$. It is worthwhile to contrast that
$\btheta$ is the packet of desired information symbols and the entries of
$\bx[k]$ are the actual signals sent at each channel use to convey that
information to the destination. Note that some quantities have a retransmission
index, $k$, which refers to time on the packet level (i.e., for each $k$ a
length $L$ signal, $\bx[k]$, is transmitted).  Furthermore, the transmit vector
is constrained by the power constraint at the source given by
\begin{equation}
\label{pow_const}
E\left[|\bx[k]|^2\right] \leq L\rho, \quad k =
1,\ldots, N,
\end{equation}
\noindent where, as mentioned, $N$ is the maximum number of retransmissions
allowed and $\rho > 0$ is the average power per channel use.

\begin{figure}
  \centering
  \includegraphics[scale = 0.6]{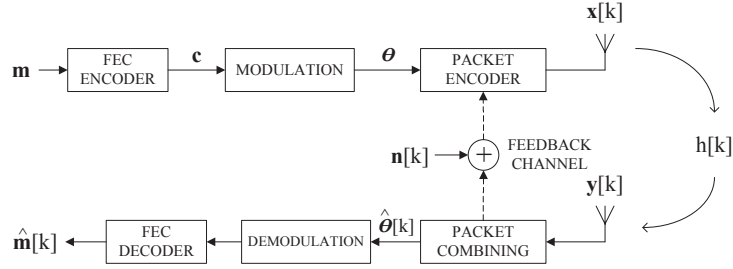}\\
  \caption{The hybrid-ARQ transmission system}\label{hybrid}
\end{figure}

At the destination, the $k^{th}$ retransmission received signal, $\by[k] \in
\mathbb{C}^{1 \times L}$, is obtained.  Using this setup, $\by[k]$ can be written as
\begin{equation}
\label{sys1}
\by[k] = h[k]\bx[k] + \bz[k], \quad 1 \leq k \leq N,
\end{equation}
\noindent where $\bz[k] \in \mathbb{C}^{1\times L}$ is additive noise whose
entries are i.i.d. complex Gaussian such that $\bz[k] \sim \cC\cN(0,\bI)$, and
$h[k] \in \mathbb{C}$ is a zero-mean complex Gaussian random variable with unit
variance. Therefore, we assume that the retransmission takes place across a
Rayleigh block fading channel with blocklength $L$. Note that $\bx, \by,$ and
$\bz$ have been defined as row vectors; this is to aid the later extension of
the scheme to a MISO/MIMO setting. After retransmission $k$, the received packet
is combined using all previously received packets to create an estimate of the
original modulated information packet, $\widehat{\btheta}[k]$. The combining
stage, in Chase combining for example, combines all the received realizations
for a given symbol using MRC. Improving upon the
encoding and combining steps using COI forms the main thrust of this paper; this
will be discussed in detail in the next section. It is worth pointing out that
the incorporation of COI feedback into the hybrid-ARQ scheme is being
implemented at the physical layer. After combining at the destination, the packet is then
demodulated using either soft or hard decoding methods and then passed to the
FEC decoder which then outputs a final estimate of the original information
packet, $\widehat{\bm}[k]$.

It is important to note that a feedback channel is present between the
destination and the source. In fact, for any ARQ protocol, a feedback
channel is necessary so the destination can send back an ACK/NACK signal.
In our setup, we assume that:
\begin{itemize}
\item The destination does not only send back
ACK/NACK information but also
feeds back CSI which could be outdated, current, or
quantized.
\item The destination can feed back the COI for the
packet to the source where \emph{COI feedback} is simply the destination feeding
back exactly (or a subset of) what it has received. This is discussed further in
Section V.
\end{itemize}

Explicitly, the causal COI at the source is equivalent to the source having
access to the past values of $\by[k]$.  However, since \emph{noisy COI feedback}
is also investigated, we introduce a feedback noise process $\bn[k]$ (see Fig.
\ref{hybrid}) so that the source now only has access to past values of
$\by[k] + \bn[k]$.  Note that the source might have access to all or only some
of the entries in $\by[k] + \bn[k]$ based on how much COI is being fed back.
Furthermore, the source can subtract out what it sent due to the availability of $h[k]$. 
Therefore, this is analogous to having access to past values of $\bz[k] + \bn[k]$. The feedback noise,
$\bn[k]$, is assumed to be complex AWGN such that $\bn[k] \sim
\cC\cN(0,\sigma^2\bI)$ and also independent of the forward noise process,
$\bz[k]$.  Note that setting $\sigma^2 = 0$ yields \emph{perfect COI feedback}
as a special case.

\section{Linear Feedback Combining}\label{sec:lfc}

We now narrow our focus to the packet encoding/combining steps of the hybrid-ARQ
system; the full system including the FEC will be considered in Section V.
Specifically, we consider employing COI feedback to better refine the
destination's packet estimate $\widehat{\btheta}[k]$ after each retransmission.
Improving the quality of the estimate will lead to fewer decoding errors and
higher throughput. To begin, we look at the most straightforward setup of SISO,
where the source and destination each have one antenna and outdated CSI along
with causal COI, whether it be noisy or perfect.  The scheme will be extended
for use with multiple antenna scenarios and the effects of varying the CSI
feedback will be discussed in Section IV.

\subsection{Overview}

To construct the new transmission strategy, we aim to develop a linear coding
scheme with the objective of maximizing the post-processed signal-to-noise
ratio~(SNR) after $N$ retransmissions. Initially, we focus on sending only one
symbol or, in other words, assume that $L = 1$ where the information packet $\btheta$ is now a
scalar, $\theta \in \mathbb{C}$.  Note that this assumption is only done to
reduce the amount of notation---the proposed scheme can be readily extended to
arbitrary packet lengths as it is assumed that it will be utilized, in general,
for $L \gg 1$. In the case that $L = 1$, however, the transmit and received
vectors $\bx[k]$ and $\by[k]$ also reduce to scalars $x[k]$ and $y[k]$.

It is helpful at this stage to introduce a mathematical framework for linear
feedback coding.  It can be seen that if $L = 1$, then, gathering all packet
transmissions together, (\ref{sys1}) can be rewritten as
\begin{equation}
\by = \bD\bx + \bz,\label{sys}
\end{equation}
\noindent where $\by = \left[y[1],y[2],\ldots,y[N]\right]^T$ is a column vector
(likewise for $\bx$ and $\bz$) and $\bD = \mathrm{diag}(h[1],h[2],\ldots,h[N])$
is a matrix formed with the channel coefficients down the diagonal.  Note that
the notation $\bD$ is chosen to give distinction between it and the
commonly-used $\bH$ for a MIMO channel matrix which is used later in the paper.
With this setup, we can write the transmit vector $\bx$ as
\begin{equation}
\bx = \bg\theta + \bF(\bz + \bn),\label{trans}
\end{equation}
\noindent where $\bg \in \mathbb{C}^{N \times 1}$ is the vector used to encode
the symbol to be sent, $\theta$, and $\bF \in \mathbb{C}^{N \times N}$ is a
strictly lower triangular matrix used to encode the side-information $\{\bz +
\bn\}$. The form of $\bF$ is constrained to be strictly lower triangular to
enforce causality.  Note that (\ref{trans}) is the transmit structure of linear
feedback coding---the transmitted value is a linear function of the
side-information and of the information message.  Furthermore, the encoding process is
encapsulated by the matrix $\bF$ and the vector $\bg$.

Now, we shift focus to the destination.  After receiving a packet, the
destination forms an estimate of the packet by
\begin{equation}
\widehat{\theta}[k] = \bq_{k}^*\by_{(k)},
\end{equation}
\noindent where $\widehat{\theta}[k]$ is the destination's estimate of the symbol $\theta$ after
$k$ retransmissions, $\bq_{k} \in \mathbb{C}^{N\times 1}$ is the
\emph{combining vector} used for the $k^{th}$ retransmission, and the notation $\by_{(k)}$ refers to the first $k$
entries of $\by$.  It can be seen that the packet estimation process can be
completely described by the vectors $\bq_{k}$.  Thus, the entire linear feedback code
can be represented by the tuple $(\bg,\bF,\bq_{k})$ \cite{ZaDa11}.  Thus, with this framework, the
post-processed SNR for the system after $N$ retransmissions can be defined as
\begin{equation}
SNR = \frac{|\bq^*\bD\bg|^2 \rho}{\|\bq^*(\bI+\bD\bF)\|^2 + \sigma^2\|\bq^*\bD\bF\|^2}.
\end{equation}
\noindent where $\bq = \bq_{N}$; the subscript is dropped for convenience.  We can now mathematically define the overall objective of this
section: to construct a linear feedback code that maximizes the post-processed
SNR given the channel coefficients $h[k]$; this is equivalent to finding
\begin{equation}
(\bg,\bF,\bq)_{opt} = \argmax_{(\bg,\bF,\bq)} \frac{|\bq^*\bD\bg|^2 \rho}{\|\bq^*(\bI+\bD\bF)\|^2 + \sigma^2\|\bq^*\bD\bF\|^2},\label{opt}
\end{equation}
\noindent while satisfying the average power constraint (\ref{pow_const}) and causality of side-information.

In the special case of a SISO system with perfect COI ($\sigma^2 = 0$) and
causal CSI, it can be shown that the solution to (\ref{opt}) has the structure
of the scheme given in \cite{liu1} (the specific derivation is omitted due to
space concerns).  Interestingly, the scheme in that work was derived using a
control-theoretic approach instead of using post-processed SNR as an objective
function \cite{El04}.   Most importantly, the optimality of the scheme in the
SNR sense motivates our construction. In particular, we develop a generalization
of the feedback scheme presented in \cite{liu1} as this scheme was not only
shown to achieve capacity but also achieve a doubly exponential decay in
probability of error. In the proposed generalized scheme, we extend the original
scheme for use with:
\begin{itemize}
\item single and multiple antennas (i.e., MISO and MIMO wireless systems),
\item perfect and noisy COI feedback,
\item outdated and current CSI at the source.
\end{itemize}
\noindent The details of the proposed scheme are given in the following sections.

\subsection{Encoding}

The fundamental idea of the transmission scheme is to transmit the scaled
estimation error from the previous transmission for each successive
retransmission so that the destination can attempt to correct its current estimate \cite{elias,
gallager}.  The scaling operation is performed so the transmitted signal meets
the average power constraint (\ref{pow_const}).  To further illustrate this
concept and help motivate our construction, we now briefly present a heuristic
overview of the scheme in \cite{liu1}.  In this case, the transmitted signal,
$x[k]$, is given as
\begin{equation}
x[k+1] = \delta[k] e[k],\label{liuenc}
\end{equation}
\noindent where $e[k] = \theta - \widehat{\theta}[k]$ is the error in the
destination's estimate of the message after the $k^{th}$ packet reception and
$\delta[k]$ is the scaling factor chosen to appease the power constraint.
After receiving $y[k]$, the destination then forms an estimate of the error,
$\widehat{e}[k]$. This is then subtracted from the current estimate.  As will be
shown, our proposed scheme is motivated by this error-scaling technique.

We now define how the source encodes the message.  As the encoding process for a
perfect COI feedback and the encoding process for a noisy COI feedback are very
similar, we now introduce the encoding process for both perfect and noisy COI
feedback in a single framework.  The encoding operation of the proposed scheme
can be written compactly in the definitions of $\bF$ and $\bg$; they are
constructed as:
\begin{itemize}
\item The $(i,j)^{th}$ entry of $\bF$, $f_{i,j}$, is
\[
f_{i,j} = \left\{\begin{array}{l r}
-\sqrt{\gamma}\rho\phi[i-1]h^*[j], & i>j,\\
0,& i \leq j,\end{array}\right.
\]
\item The $i^{th}$ entry of $\bg$, $g_{i}$, is
\[
g_{i} = \phi[i-1],
\]
\end{itemize}
\noindent where
\[
\phi[k] = \left\{\begin{array}{l r}
\displaystyle\prod_{i=1}^{k}\beta_{(\gamma,\sigma^2)}[i], & k > 0\\
1, & k = 0,\\
\end{array} \right.
\]
\begin{equation}
\beta_{(\gamma,\sigma^2)}[k] = \left(1 + (1+\sigma^2)\gamma\rho |h[k]|^2\right)^{-1/2},\label{betaeq}
\end{equation}
\noindent and $\gamma \in [0,1]$ is a constant.  Note that the scaling factor
$\delta[k]$ in (\ref{liuenc}) is now given its analog by the term $\phi[k]$
which ensures the proposed scheme meets the power constraint (\ref{pow_const}).

The scheme presented here in the form of $\bg$ and $\bF$ is a direct
generalization of the error-scaling scheme in (\ref{liuenc}) as the original
scheme for perfect COI feedback can be obtained as a special case of these
definitions by letting $\gamma = 1$ and $\sigma^2 = 0$.  The main mechanism
introduced into the proposed scheme is a power allocation variable, $\gamma$, to
help combat the effect of the feedback noise, $n[k]$ \cite{ZaDa11}.  Specifically, $\gamma$ is
a degree of freedom introduced to allocate power between the encoding of
feedback side-information and the information to be sent.  It is only of use
when the feedback channel is noisy; if feedback noise is not present, it should
be set to $\gamma = 1$ and disregarded. In brief, as $\gamma \rightarrow 0$,
this scheme simply repeats the packet on every retransmission (i.e., the scheme
becomes equivalent to Chase combining). As $\gamma$ grows, the scheme uses most
of the feedback power to mitigate the noise in the destination estimate. This
quantity is discussed in detail later in the paper.

Now, with $\bg$ and $\bF$ defined, the encoding process is completely described,
and we can now move on to verifying that it meets the average transmit power
constraint (\ref{pow_const}). As will be shown, it is much easier to derive the
average transmit power of the proposed scheme if it is rewritten in a recursive
manner; thus, its recursive form is now presented.  Assuming that the symbol is
scaled such that $E[|\theta|^2] = \rho$, the first packet transmission is set to
the symbol itself with $x[1] = \theta$.  The subsequent transmissions can be
written as
\begin{equation}
x[k + 1] = \beta_{(\gamma,\sigma^2)}[k]\left(x[k] - \sqrt{\gamma}\rho
h^*[k](z[k] + n[k]) \right),~1 < k \leq N.
\label{recur}
\end{equation}
\noindent With the recursive formulation given, we can now present the following lemma.
\begin{lemma}
The proposed scheme in (\ref{recur}) meets the average transmit power constraint
given in (\ref{pow_const}) for both noisy and perfect COI feedback.
\end{lemma}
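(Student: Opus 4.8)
The plan is to establish the power constraint by induction on the retransmission index $k$, working from the recursive form (\ref{recur}) rather than the batch form (\ref{trans}), which is precisely why that recursion was introduced. Since we have specialized to $L = 1$, the constraint (\ref{pow_const}) reads $E[|x[k]|^2] \leq \rho$, and I would in fact prove the stronger statement that equality holds for every $k$, so the scheme always spends its full power budget. The base case is immediate: the first transmission is $x[1] = \theta$, and the normalization $E[|\theta|^2] = \rho$ gives $E[|x[1]|^2] = \rho$.

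For the inductive step I would assume $E[|x[k]|^2] = \rho$ and expand the squared magnitude of (\ref{recur}). Writing $\beta = \beta_{(\gamma,\sigma^2)}[k]$ for brevity, this produces three contributions: the self term $\beta^2 E[|x[k]|^2]$, a cross term proportional to $\mathrm{Re}\{h^*[k]\,E[x^*[k](z[k]+n[k])]\}$, and the injected-noise term $\beta^2\gamma\rho^2|h[k]|^2\,E[|z[k]+n[k]|^2]$. The noise moment is computed directly: since $z[k]$ and $n[k]$ are independent, zero-mean, with $z[k]\sim\cC\cN(0,1)$ and $n[k]\sim\cC\cN(0,\sigma^2)$, we have $E[|z[k]+n[k]|^2] = 1+\sigma^2$. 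Granting for the moment that the cross term vanishes, the inductive hypothesis then leaves $\beta^2\rho\,(1 + (1+\sigma^2)\gamma\rho|h[k]|^2)$, and substituting $\beta^2 = (1+(1+\sigma^2)\gamma\rho|h[k]|^2)^{-1}$ from (\ref{betaeq}) collapses this to exactly $\rho$, closing the induction.

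The one step that genuinely needs care---and which I expect to be the crux---is showing that the cross term is zero. The argument rests on the causality built into the strictly lower triangular $\bF$: unrolling (\ref{recur}) shows that $x[k]$ is a deterministic function of $\theta$ and of the past side-information $\{z[j]+n[j]\}_{j=1}^{k-1}$ alone. Because the current noise pair $(z[k],n[k])$ is independent of $\theta$ and of all earlier noise, it is independent of $x[k]$, so $E[x^*[k](z[k]+n[k])] = E[x^*[k]]\,E[z[k]+n[k]] = 0$ by zero mean. I would state this independence explicitly (treating the channel gains $h[k]$ as known at the source, consistent with the CSI assumption), since the entire variance bookkeeping hinges on it, while the surrounding algebra is routine. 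Finally, the perfect-feedback regime is recovered by setting $\sigma^2 = 0$ and $n[k]\equiv 0$, so the single induction covers both the noisy and perfect cases simultaneously.
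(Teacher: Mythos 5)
Your proof is correct and takes essentially the same route as the paper's: an induction on $k$ with base case $E[|x[1]|^2]=\rho$, expansion of the recursion (\ref{recur}), and collapse to $\rho$ via the definition of $\beta_{(\gamma,\sigma^2)}[k]$ in (\ref{betaeq}). The only difference is presentational---you make explicit the vanishing of the cross term (via causality/independence) and the moment $E[|z[k]+n[k]|^2]=1+\sigma^2$, steps the paper's proof absorbs silently into its conditional-expectation computation.
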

\begin{proof}
The proof is based on a simple inductive argument. Since the symbol has been
scaled to have a second moment of $\rho$, the average power of the first
transmission is $E[|x[1]|^2] = \rho$.  Assume that $E[|x[k]|^2] =
\rho$ for some $k$. Using (\ref{recur}), we can write the average
transmit power for the $(k + 1)^{th}$ retransmission of packet $\theta$
conditioned on channel realization $h[k]$ as
\begin{eqnarray*}
E\left[|x[k + 1]|^2 \Big{|} h[k]\right] & = &
E\left[\left|\beta_{(\gamma,\sigma^2)}[k]\left(x[k] - \sqrt{\gamma}\rho
h^*[k](z[k] + n[k]) \right)\right|^2\Big{|}h[k]\right]\\
& = & \frac{1}{1 +
(1+\sigma^2)\gamma \rho |h[k]|^2 } E\left[\left|x[k] - \sqrt{\gamma}\rho
h^*[k](z[k] + n[k]) \right|^2\Big{|}h[k]\right]\\
& = & \frac{1}{1 + (1+\sigma^2)\gamma \rho |h[k]|^2 }
\left(E\left[|x[k]|^2\Big{|}h[k]\right] + (1+\sigma^2)\gamma\rho^2
|h[k]|^2\right)\\ & \stackrel{(a)}{=} & \frac{1}{1 + (1+\sigma^2)\gamma\rho |h[k]|^2 }
\left(\rho + (1+\sigma^2)\gamma\rho^2 |h[k]|^2\right)\\
& = & \rho,
\end{eqnarray*}
where the equality in $(a)$ follows from $E\left[|x[k]|^2\Big{|}h[k]\right] = E\left[|x[k]|^2\right] =
\rho$. Therefore by the principle of mathematical induction, the equality holds for any
arbitrary $k$.
\end{proof}

Now that the encoding operation has been described and verified to meet
the average transmit power constraint, it is possible to move on to the decoding
stage.

\subsection{Decoding}
In this section, we discuss the decoding process in the proposed scheme.  It is
worthwhile to point out that we only perform soft signal-level decoding---the
output of the destination is an estimate that is not necessarily mapped to an
output alphabet. Unlike the encoding operation, decoding at the destination
significantly differs depending on whether perfect or noisy COI is available at
the source.

\subsubsection{Perfect COI Decoding ($\sigma^2 = 0$)}
First, we look into defining $\bq$ for
perfect COI. In the special case when the feedback channel is perfect, this scheme assumes
the structure of the feedback scheme in \cite{liu1}; we reproduce it in this
section for completeness.  In this case, the combining vector $\bq$ has
a concise closed form.  In particular using the definition in (\ref{betaeq}),
the $i^{th}$ component of $\bq$, $q_{i}$ can be given as
\begin{equation}
q_{i} = \phi[i-1]\beta_{(1,0)}^{2}[i]\rho h^*[i].
\end{equation}

\noindent Because of this definition, $\bq_{k}$ for perfect COI can be defined as
$\bq_{k} = [q_{1},\ldots,q_{k}]^T$.  Note that since the COI at the source is assumed to be
perfect, $\sigma^2 = 0$ and $\gamma = 1$.  Now that $\bq$ has been defined, the entire scheme for perfect
COI feedback has been described, and the structure of $\bq$ can be
used to formulate the decoding process in a recursive fashion.  Thus, at this point, we
introduce the following lemma:

\begin{lemma}\label{lem2}
The coding scheme for perfect COI feedback can be alternatively represented as
\begin{eqnarray}
\label{recenc}
x[k + 1] & = & \beta_{(1,0)}[k]\left(x[k] -
\rho h^*[k]z[k]\right)\\
\label{recdec} \widehat{{\theta}}[k] & = & \left(1 -
|\phi[k]|^2 \right)\theta + \rho
|\phi[k]|^2\sum_{m =
1}^{k}\phi^{-1}[m - 1]h^*[m]z[m].
\end{eqnarray}
\end{lemma}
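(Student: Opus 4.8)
The plan is to handle the two displayed identities separately. Equation (\ref{recenc}) requires essentially no work: it is the recursion (\ref{recur}) with the perfect-COI parameters $\gamma = 1$, $\sigma^2 = 0$, and $n[k] = 0$ substituted in, so it is inherited directly from the setup used for the power-constraint lemma. All of the substance is in (\ref{recdec}), and my approach would be to unroll the encoder into a closed form and then collect coefficients in the combiner output.

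First I would unroll (\ref{recenc}). Writing $\beta[k]$ for $\beta_{(1,0)}[k]$ and substituting $u[k] = \phi^{-1}[k-1]x[k]$ linearizes the recursion: since $\phi[k] = \phi[k-1]\beta[k]$, the scaling factor cancels and one gets $u[p+1] = u[p] - \rho\phi^{-1}[p-1]h^*[p]z[p]$ with $u[1] = \theta$. This telescopes to the closed form $x[p] = \phi[p-1]\theta - \rho\phi[p-1]\sum_{j=1}^{p-1}\phi^{-1}[j-1]h^*[j]z[j]$. The structural point to keep in sight is that each past feedback-noise symbol $h^*[j]z[j]$ is re-injected into \emph{every} later transmission, so it will contribute to the estimate through many terms, not just one.

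Next I would substitute $y[p] = h[p]x[p] + z[p]$ and this closed form into $\widehat{\theta}[k] = \bq_k^*\by_{(k)} = \sum_{p=1}^k \overline{q_p}\,y[p]$, pairing $\overline{q_p}$ with $h[p]$ so the signal factor becomes $|h[p]|^2$, and then read off coefficients. For the coefficient of $\theta$, each $p$ contributes $\phi[p-1]^2\beta^2[p]\rho|h[p]|^2 = \phi[p-1]^2 - \phi[p]^2$, using $\rho|h[p]|^2\beta^2[p] = 1-\beta^2[p]$ and $\phi[p]^2 = \phi[p-1]^2\beta^2[p]$; summing telescopes to $\phi[0]^2 - \phi[k]^2 = 1 - \phi[k]^2$, the first term of (\ref{recdec}). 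For the coefficient of a fixed noise symbol $h^*[m]z[m]$ I would gather its two sources: the direct reception at $p = m$, of weight $\rho\phi[m-1]\beta^2[m]$, and the re-injected copies from all later packets $p > m$, each of weight $-\rho(\phi[p-1]^2 - \phi[p]^2)\phi^{-1}[m-1]$. The latter telescope to $-\rho\phi^{-1}[m-1](\phi[m]^2 - \phi[k]^2)$, and since $\rho\phi[m-1]\beta^2[m] = \rho\phi^{-1}[m-1]\phi[m]^2$, the direct term cancels the $\phi[m]^2$ piece and leaves exactly $\rho\phi[k]^2\phi^{-1}[m-1]$, matching (\ref{recdec}).

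The main obstacle I anticipate is the noise bookkeeping rather than any single manipulation: one has to recognize that the coefficient of each $h^*[m]z[m]$ in the final estimate is a telescoping sum over all retransmissions $p \geq m$, and that the clean uniform weight $\rho\phi[k]^2\phi^{-1}[m-1]$ emerges only after that sum collapses via $\phi[p]^2 = \phi[p-1]^2\beta^2[p]$. A cleaner alternative that hides some of this accounting is induction on $k$: because $\bq_{k-1}$ is the truncation of $\bq_k$, one has the additive update $\widehat{\theta}[k] = \widehat{\theta}[k-1] + \overline{q_k}\,y[k]$, and it suffices to check that this single increment simultaneously supplies the correct jump in the $\theta$-coefficient, rescales every previously accumulated noise weight by $\beta^2[k] = \phi[k]^2/\phi[k-1]^2$, and introduces the new $h^*[k]z[k]$ term. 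I would present whichever route makes the telescoping most transparent, and I would state the conjugation convention explicitly so that the signal combination $\overline{q_p}h[p]$ is real and proportional to $|h[p]|^2$.
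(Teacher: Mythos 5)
Your proof is correct, and it takes a genuinely different route from the paper's. The paper's appendix argument never touches the combining vector $\bq_k$ directly: it re-expresses the scheme in error-feedback form, with encoder $x[k+1] = \phi^{-1}[k]e[k]$, $e[k] = \theta - \widehat{\theta}[k]$, and a recursive decoder $\widehat{x}[k+1] = (1+\rho|h[k+1]|^2)^{-1}\rho h^*[k+1]\,y[k+1]$, $\widehat{e}[k] = \phi[k]\widehat{x}[k+1]$, $\widehat{\theta}[k+1] = \widehat{\theta}[k]+\widehat{e}[k]$. The encoder recursion then follows algebraically from $e[k+1] = e[k]-\widehat{e}[k]$, and the closed form of the estimate drops out in one line from the identity $\theta - \widehat{\theta}[k] = \phi[k]\,x[k+1]$: the estimate's coefficients are just $\phi[k]$ times those of the unrolled transmit signal, so the per-noise-symbol bookkeeping you carry out (the telescoping over all $p \geq m$ for each fixed $m$) never appears. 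That shortcut, plus the conceptual point that the transmitted signal \emph{is} the scaled estimation error, is what the paper's route buys. What your route buys is fidelity to the definitions in Section III: you start from the stated recursion (your specialization $\gamma = 1$, $\sigma^2 = 0$, $n[k] = 0$ is the same object the paper uses in its Lemma 1) and, more importantly, from the stated combiner $\widehat{\theta}[k] = \bq_k^*\by_{(k)}$ with $q_i = \phi[i-1]\beta_{(1,0)}^2[i]\rho h^*[i]$, and you verify directly that these weights reproduce the claimed estimate; the paper instead silently identifies its recursive decoder with that combiner (the increment $\phi[k]\widehat{x}[k+1]$ is precisely the weight $\rho\phi[k]\beta_{(1,0)}^2[k+1]h^*[k+1]$ applied to $y[k+1]$), so your calculation closes a small consistency gap the paper leaves implicit. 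Your closing remark on conjugation is also warranted: taken literally, $q_i \propto h^*[i]$ combined with $\bq_k^*\by_{(k)}$ would pair $y[i]$ with $h[i]$ rather than $h^*[i]$; the weight actually applied to $y[i]$ must be $\rho\phi[i-1]\beta_{(1,0)}^2[i]h^*[i]$, which is what the paper's recursive decoder uses and what your convention matches. Both your telescoping identities ($\rho|h[p]|^2\beta_{(1,0)}^2[p] = 1-\beta_{(1,0)}^2[p]$ giving the $\theta$-coefficient $1-\phi^2[k]$, and the cancellation leaving $\rho\phi^2[k]\phi^{-1}[m-1]$ on each noise term) check out, so the proof is sound as written.
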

\noindent The proof has been relegated to the Appendix.  Note that Lemma 2 suggests that the estimator of the proposed scheme is
a biased one. However, we can easily make the final estimated output unbiased by
performing the appropriate scaling. We can define the unbiased
estimator of packet $\theta$ as
\begin{eqnarray}
\nonumber \widehat{{\theta}}^u[k] & = & \left(1 -
|\phi[k]|^2
\right)^{-1}\widehat{{\theta}}[k]\\
\label{theta_ub} & = & \theta + \rho \left(1 -
|\phi[k]|^2
\right)^{-1} |\phi[k]|^2\sum_{m =
1}^{k}\phi^{-1}[m - 1]h^*[m]z[m].
\end{eqnarray}


\subsubsection{Noisy COI Decoding ($\sigma^2 > 0$)}

The source is now assumed to have corrupted COI from the destination.
Note that the two main differences between perfect COI decoding and noisy
COI decoding are:
\begin{itemize}
  \item The power allocation variable, $\gamma$, is now a degree of freedom.  This allows the
  source to allocate more or less power to the message signal to adapt to
  conditions of the feedback channel.
  \item The destination can no longer be derived in a simple form as in the
  noiseless feedback case. It is derived from the form of the
  optimal linear estimator of the symbol, $\theta$.
\end{itemize}

\noindent It can be shown that, if $\sigma^2 > 0$, the optimal $\bq$ that maximizes post-processed SNR with the setup in
(\ref{sys}) and (\ref{trans}) is given by
\begin{equation}
\bq =
\frac{\bC^{-1}{\bD}\bg}{\bg^{*}{\bD}^{*}\bC^{-1}{\bD}\bg},
\label{rec}
\end{equation}
\noindent where $\bC = ({\bD}\bF + \bI)({\bD}\bF + \bI)^{*} +
\sigma^2{\bD}\bF\bF^{*}{\bD}^{*}$ is the effective noise covariance
matrix seen at the destination.  Note that this definition of $\bq$ assumes that
all $N$ retransmissions are used as $\bq$ will be a length $N$ vector.  To obtain $\bq_{k}$
where $1 < k < N$, one can simply truncate the vectors (and matrices) in (\ref{rec}) to simply
the first $k$ entries (rows and columns).

With this setup, the post-processed SNR, given the channel coefficients $h[k]$,
can be written as
\begin{equation}
SNR = \rho\left(\bg^{*}{\bD}^{*}\bC^{-1}{\bD}\bg\right).
\label{SNR}
\end{equation}
It is difficult to derive a simple expression for (\ref{SNR}); we instead
formulate bounds on the post-processed SNR.  This is done in the
following lemma for the case of $N = 2$ in the low and high $\rho$ regimes.
\begin{lemma}
Given the linear feedback code described above with blocklength $N = 2$, at
small $\rho$ (i.e., $\rho\ll 1$), the average post-processed $SNR$ can be
bounded by
\begin{equation}
E[SNR_{N=2}]  <  2\rho\left(1 + \sqrt{\gamma}\rho + \gamma
\rho^2\right),
\label{SNRupper}
\end{equation}
and
\begin{equation}
E[SNR_{N=2}]  \underset{\rho \rightarrow 0}{>} 2 \rho \left(1 +
\sqrt{\gamma}\rho - \frac{1 + \sigma^2}{2}\gamma \rho\right).
\label{SNRlower}
\end{equation}
Furthermore, at large $\rho$ (i.e., $\rho\gg1$), the average
post-processed $SNR$ expression behaves as:
\begin{equation}
E[SNR_{N=2}] \underset{\rho \rightarrow \infty}{\longrightarrow} \rho\left( 1 +
\frac{1}{\sigma^2}\right).
\label{SNRhigh}
\end{equation}
\end{lemma}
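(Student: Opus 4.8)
The plan is to reduce the $N=2$ problem to an explicit scalar computation and then analyze the resulting rational expression separately in each $\rho$ regime. First I would instantiate from the encoder definitions the matrices $\bD=\mathrm{diag}(h[1],h[2])$, the strictly lower-triangular $\bF$ whose only nonzero entry is $f_{2,1}=-\sqrt{\gamma}\rho\,\phi[1]h^*[1]$, and $\bg=(1,\phi[1])^T$, and form $\bC=(\bI+\bD\bF)(\bI+\bD\bF)^*+\sigma^2\bD\bF\bF^*\bD^*$. Because $\bD\bF$ has a single nonzero entry, $\bC$ is a $2\times2$ matrix with an elementary closed form, and inverting it by the adjugate formula turns $SNR=\rho\,(\bD\bg)^*\bC^{-1}(\bD\bg)$ into a ratio of low-degree polynomials. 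Writing $a=|h[1]|^2$, $b=|h[2]|^2$ and using $\phi[1]^2=\beta_{(\gamma,\sigma^2)}^2[1]=(1+(1+\sigma^2)\gamma\rho a)^{-1}$, I expect to obtain
\[
SNR=\rho\,\frac{a+b+(1+\sigma^2)\gamma\rho a^2+2\sqrt{\gamma}\rho ab+(1+\sigma^2)\gamma\rho^2a^2b}{1+(1+\sigma^2)\gamma\rho a+\sigma^2\gamma\rho^2 ab},
\]
where the cross term $2\sqrt{\gamma}\rho ab$ arises from the (real) interference between the direct and feedback contributions. Since $a,b$ are i.i.d.\ unit-mean exponentials, every moment I need ($E[a]=1$, $E[a^2]=2$, $E[ab]=1$, $E[a^2b]=2$) is immediate.

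For the small-$\rho$ statements I would expand the ratio as a power series in $\rho$ with coefficients polynomial in $a,b$ and take expectations term by term. The $\rho^0$ and $\rho^1$ terms of the expected ratio $E[SNR/\rho]$ give $E[SNR]=2\rho+\bigl(2\sqrt{\gamma}-(1+\sigma^2)\gamma\bigr)\rho^2+O(\rho^3)$, whose leading two terms are exactly the lower bound (\ref{SNRlower}). For the upper bound (\ref{SNRupper}) I would instead bound the exact expression directly: lower-bounding the denominator by $1+(1+\sigma^2)\gamma\rho a$ (dropping the nonnegative $\sigma^2\gamma\rho^2ab$ term) splits $SNR$ into a first piece bounded by $\rho a$ and a second piece whose numerator factor $(1+\sqrt{\gamma}\rho a)^2+\sigma^2\gamma\rho^2a^2$, divided by that denominator, I replace by $(1+\sqrt{\gamma}\rho a)^2$; this second piece then integrates against $b$ to $\rho(1+2\sqrt{\gamma}\rho+2\gamma\rho^2)$, and summing yields $2\rho(1+\sqrt{\gamma}\rho+\gamma\rho^2)$.

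For the large-$\rho$ limit I would read off the dominant $\rho^2$ terms of the numerator, $(1+\sigma^2)\gamma\rho^2a^2b$, and of the denominator, $\sigma^2\gamma\rho^2ab$ (this is where $\sigma^2>0$ is essential, as it supplies the only $\rho^2$ term downstairs), so that $SNR/\rho\to(1+\sigma^2)a/\sigma^2$ pointwise in $(a,b)$. Passing the limit through the expectation by dominated convergence—exhibiting a $\rho$-independent integrable majorant of the form $a+b+\mathrm{const}$ valid for $\rho\ge1$—and using $E[a]=1$ then gives $E[SNR]\to\rho(1+1/\sigma^2)$, which is (\ref{SNRhigh}).

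The main obstacle is that $SNR$ is a ratio of correlated random quantities, so expectation does not distribute over the decomposition and the clean term-by-term bounds need not hold pointwise for every channel realization. In particular, the numerator-factor replacement used for the upper bound's second piece rests on the inequality $(1+\sqrt{\gamma}\rho a)^2+\sigma^2\gamma\rho^2a^2\le(1+\sqrt{\gamma}\rho a)^2\bigl(1+(1+\sigma^2)\gamma\rho a\bigr)$, which can fail when $a$ is atypically large, equivalently when $\gamma$ is small relative to $\sigma^2$. I would therefore either confine the upper-bound argument to the dominant small-$\rho$ behavior or control the exponentially suppressed large-$a$ tail explicitly, and make the large-$\rho$ interchange rigorous through the uniform integrable majorant above; assembling these domination estimates is the part requiring the most care.
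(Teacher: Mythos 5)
Your overall route is the paper's: you derive the same closed-form $N=2$ expression (your combined fraction is exactly the paper's intermediate expression after clearing the factor $\beta_{(\gamma,\sigma^2)}^2[1]$), and then treat the three regimes by explicit bounds plus expectations over the two unit-mean exponentials. Your lower bound via a Taylor expansion in $\rho$ and your high-$\rho$ limit via dominated convergence are workable, and the latter is actually handled more carefully than in the paper, which passes the limit through the expectation without comment.

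The genuine gap is in your upper bound, and it is not patchable in the way you suggest. Your decomposition attaches the cross term $\sigma^2\gamma\rho^2a^2b$ to the $b$-piece of the numerator, which forces you to need $(1+\sqrt{\gamma}\rho a)^2+\sigma^2\gamma\rho^2a^2\le(1+\sqrt{\gamma}\rho a)^2\bigl(1+(1+\sigma^2)\gamma\rho a\bigr)$, equivalently $(1+\sigma^2)(1+\sqrt{\gamma}\rho a)^2\ge \sigma^2\rho a$. Writing $t=\rho a$, this holds for all realizations only if $(1+\sigma^2)\min_{t>0}(1+\sqrt{\gamma}t)^2/t=4\sqrt{\gamma}(1+\sigma^2)\ge\sigma^2$; when $\gamma<\sigma^4/\bigl(16(1+\sigma^2)^2\bigr)$ (and small $\gamma$ is precisely the regime the paper later advocates, e.g.\ $\gamma=0.01$) the inequality fails for $a$ near $1/(\sqrt{\gamma}\rho)$ \emph{no matter how small $\rho$ is} --- the failure condition is scale-invariant in $t$, so confining attention to small $\rho$ does not rescue it, and controlling the tail is genuine extra work since the failing region has probability $e^{-\Theta(1/(\sqrt{\gamma}\rho))}$ but an unbounded integrand. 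The paper avoids all of this by grouping the cross term with the $a$-piece instead: since $a\bigl(D_1+\sigma^2\gamma\rho^2ab\bigr)/\bigl(D_1+\sigma^2\gamma\rho^2ab\bigr)=a$ exactly, where $D_1=1+(1+\sigma^2)\gamma\rho a$, one has pointwise, for every realization and every $\rho$,
\[
SNR_{N=2}/\rho \;=\; a+\frac{b(1+\sqrt{\gamma}\rho a)^2}{1+(1+\sigma^2)\gamma\rho a+\sigma^2\gamma\rho^2ab}\;\le\; a+b(1+\sqrt{\gamma}\rho a)^2,
\]
because the denominator is at least $1$; taking expectations gives $2\rho(1+\sqrt{\gamma}\rho+\gamma\rho^2)$ with no side condition on $\gamma$ whatsoever. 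A smaller caveat of the same flavor: your term-by-term expectation for the lower bound expands a geometric series whose radius of convergence depends on $(a,b)$ and vanishes in the tail, so the interchange needs justification; the paper sidesteps this by using the pointwise inequality $(1+\xi)^{-1}>1-\xi$, valid for all $\xi>0$, before taking any expectations.
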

\begin{proof}
In the case of $N = 2$, the post-processed SNR using (\ref{SNR}) can be
calculated to be
\begin{equation}
SNR_{N=2} = \rho\left( |h[1]|^2 + \frac{\beta_{(\gamma,\sigma^2)}^2[1]
|h[2]|^2 (1+\sqrt{\gamma}\rho |h[1]|^2)^{2}}{1 + \sigma^2
\gamma\rho^2 \beta_{(\gamma,\sigma^2)}^2[1]|h[1]|^2 |h[2]|^2}\right).
\label{SNR2}
\end{equation}

\noindent Using (\ref{betaeq}) which states that $\beta_{(\gamma,\sigma^2)}[k]
< 1$ for any $k$, it is clear that,
\begin{equation*}
SNR_{N=2} < \rho \left(|h[1]|^2 + |h[2]|^2
(1+\sqrt{\gamma}\rho |h[1]|^2)^{2} \right).
\end{equation*}
Now taking expectation on both sides and using the independence of fading blocks in
time and $E[|h[1]|^2] = E[|h[2]|^2] =1, E[|h[1]|^4] = 2$, we immediately
get
\begin{equation}
E[SNR_{N=2}] < 2\rho\left(1 + \sqrt{\gamma}\rho + \gamma
\rho^2\right).
\end{equation}
Using the inequality $(1 + \xi)^{-1} > (1 - \xi)$  valid for any real $\xi$ in
(\ref{SNR2}),
\begin{equation}
SNR_{N=2} >  \rho\left(|h[1]|^2 + \beta_{(\gamma,\sigma^2)}^2[1] |h[2]|^2
(1+\sqrt{\gamma}\rho |h[1]|^2)^{2}\left(1 - \sigma^2
\gamma\rho^2 \beta_{(\gamma,\sigma^2)}^2[1]|h[1]|^2 |h[2]|^2\right)\right).
\label{lb_SNR}
\end{equation}
Taking the conditional expectation with respect to $h[2]$ in (\ref{lb_SNR}), we
get
\begin{eqnarray*}
E\left[SNR_{N=2}\Big{|}h[2]\right] & > & \rho\left(|h[1]|^2 +
\beta_{(\gamma,\sigma^2)}^2[1] \left(1+\sqrt{\gamma}\rho |h[1]|^2\right)^{2}(1 - 2\sigma^2\gamma
\rho^2\beta_{(\gamma,\sigma^2)}^2[1]|h[1]|^2)\right)\\
& = & \rho\left(|h[1]|^2 + \beta_{(\gamma,\sigma^2)}^2[1]
\left(1+\sqrt{\gamma}\rho |h[1]|^2\right)^{2} + O(\rho^2)\right).
\end{eqnarray*}

By the definition of $\beta_{(\gamma,\sigma^2)}^2[1]$ in (\ref{betaeq}) and the
inequality $(1 + \xi)^{-1} > (1 - \xi)$, we have
\begin{eqnarray*}
E\left[SNR_{N=2}\Big{|}h[2]\right] & > & \rho\left(|h[1]|^2 + \left(1 - (1 +
\sigma^2)\gamma \rho |h[1]|^2\right)\left(1+\sqrt{\gamma}\rho |h[1]|^2\right)^{2}
+ O(\rho^2)\right)\\
& = & \rho\left(|h[1]|^2 + 1 + 2\sqrt{\gamma}\rho|h[1]|^2 - (1 +
\sigma^2)\gamma \rho | h[1] |^2 + O(\rho^2)\right).
\end{eqnarray*}
Now taking expectation with respect to the channel realization $h[1]$, we
immediately get
\begin{equation*}
E[SNR_{N=2}]  > 2 \rho \left(1 + \sqrt{\gamma}\rho\left(1 - \frac{1 +
\sigma^2}{2}\sqrt{\gamma}\right) + O(\rho^2) \right).
\end{equation*}
Therefore in the small $\rho$ regime we have,
\begin{equation*}
E[SNR_{N=2}]  \underset{\rho \rightarrow 0}{>} 2 \rho \left(1 +
\sqrt{\gamma}\rho\left(1 - \frac{1 + \sigma^2}{2}\sqrt{\gamma}\right)\right).
\end{equation*}

\noindent Hence for the proposed linear scheme to have better performance than
MRC, we require that $\gamma < \sqrt{\frac{2}{1 + \sigma^2}}$.

\noindent In the case of  large $\rho$, the expression in (\ref{SNR2}) by
approximating the second term can be written as
\begin{eqnarray*}
SNR_{N=2} & \underset{\rho \rightarrow \infty}{\longrightarrow} & \rho\left(
|h[1]|^2 + \frac{\beta_{(\gamma,\sigma^2)}^2[1] |h[2]|^2 (1+\sqrt{\gamma}\rho
|h[1]|^2)^{2}}{\sigma^2 \gamma\rho^2 \beta_{(\gamma,\sigma^2)}^2[1]|h[1]|^2
|h[2]|^2}\right)\\ & = & \rho\left( |h[1]|^2 + \frac{(1+\sqrt{\gamma}\rho
|h[1]|^2)^{2}}{\sigma^2 \gamma\rho^2 |h[1]|^2}\right)\\
& \underset{\rho \rightarrow \infty}{\longrightarrow} & \rho\left( |h[1]|^2 +
\frac{\gamma\rho^2 |h[1]|^4}{\sigma^2 \gamma\rho^2 |h[1]|^2}\right)\\
& = & \rho\left( |h[1]|^2 + \frac{1}{\sigma^2}|h[1]|^2\right).
\end{eqnarray*}
Now, taking expectation we get,
\begin{equation*}
E[SNR_{N=2}] \underset{\rho \rightarrow \infty}{\longrightarrow} \rho\left( 1 +
\frac{1}{\sigma^2}\right).
\end{equation*}
\noindent Note that when $\sigma^2 \approx 0$, the above scheme yields
significant benefits.
\end{proof}

\subsection{Power Allocation}
In this subsection, we investigate the power allocation parameter $\gamma$ seen
in the scheme for noisy COI feedback.  As stated before, it can be roughly thought
of as a measure of the amount of feedback side-information being used in the
retransmission. Optimally choosing the value of $\gamma$ to maximize the
post-processed SNR in (\ref{SNR}) is clearly a non-causal
problem. Therefore instead we define $\gamma_0^{(\rm fading)}$ to be the one
that maximizes the post-processed SNR over the ensemble average of all channel
realizations, i.e.,
\begin{equation}
\gamma_{0}^{(\rm fading)} = \max_{\gamma\in[0,1]}
E\left[\rho(\bg^{*}{\bD}^{*}\bC^{-1}{\bD}\bg)\right].
\end{equation}

\begin{figure}
\centering
\includegraphics[scale = 0.5]{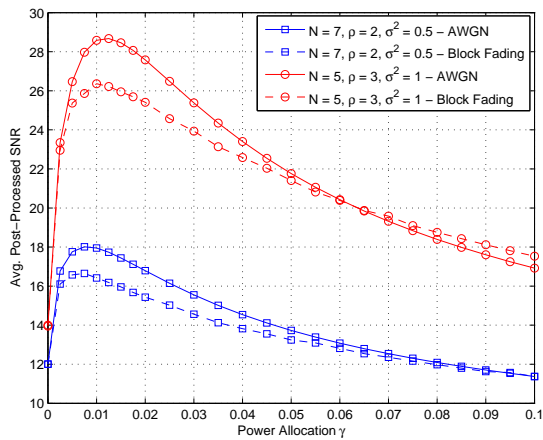}\\
\caption{Average post-processed SNR versus choice of power allocation,
$\gamma$, for AWGN and block fading.}\label{gammafig}
\end{figure}

The difficulty of analytically calculating the above quantity stems from the
post-processed SNR having non-linear dependencies on the fading coefficients
$h{[1]},\ldots,h{[N]}$. However, it turns out that the optimal $\gamma$ in the
i.i.d. Rayleigh block fading case ($\gamma_{0}^{(\rm{fading})}$) is very close
to the optimal $\gamma$ in the AWGN case ($\gamma_{0}^{(\rm{AWGN})}$) as derived
in \cite{ZaDa11}.  This is displayed in Fig. \ref{gammafig}. In Fig.
\ref{gammafig}, we see that the peaks of both performance curves for block
fading (averaged over 15,000 trials) and AWGN noise are quite close together.
This is quite beneficial as it is very easy to numerically find the value of
$\gamma$ that maximizes the post-processed SNR in the AWGN case, whereas it
proves to be much more difficult in the presence of block fading. Because of the
proximity of $\gamma_{0}^{(\rm{AWGN})}$ and $\gamma_{0}^{(\rm{fading})}$, we
assume that the value of $\gamma$ that maximizes the average post-processed SNR,
$\gamma_{0} = \gamma_{0}^{(\rm{AWGN})} \approx \gamma_{0}^{(\rm{fading})}$.  The
value of $\gamma_{0}$ does, however, change with the blocklength $N$.
Furthermore, as the number of transmissions is not necessarily known ahead of
time, it is intuitive to not choose $\gamma$ as a function of blocklength.
Alternatively, we can fix $\gamma$ based on a reasonable number of packet
retransmissions---this is discussed in the following example.

\subsection*{Example 1}
To illustrate the performance of the linear feedback scheme, we now provide some
simulations.  In this first plot (Fig. \ref{snrfig}), the post-processed SNR of
the scheme is plotted in contrast to MRC.  MRC is analogous to using our scheme
but setting $\gamma = 0$.  In other words, the source simply repeats the
packet at each retransmission.  Then, retransmissions are combined using a
linear receiver similar to the one in (\ref{rec}).  The simulations were run with an
average transmit power of $\rho = 3$ and for both noiseless COI feedback and
varying levels of noisy COI feedback. As can be seen, the linear feedback
outperforms MRC with a gap that increases with decreasing feedback noise.

\begin{figure}
\centering
\subfloat[Average Post-processed SNR]{\label{snrfig}\includegraphics[scale = 0.5]{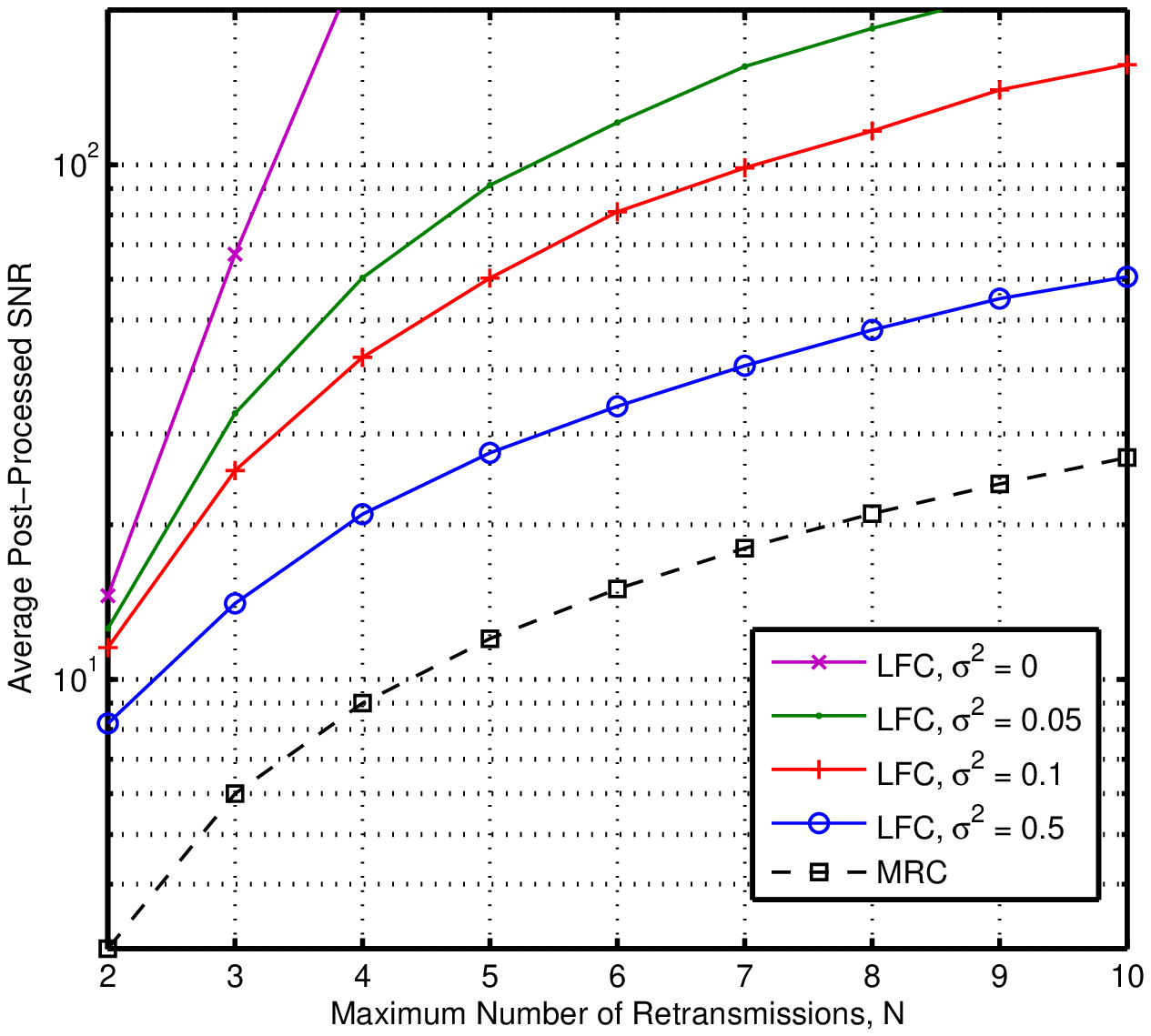}}
\subfloat[Power Allocation]{\label{gammasnr}\includegraphics[scale = 0.5]{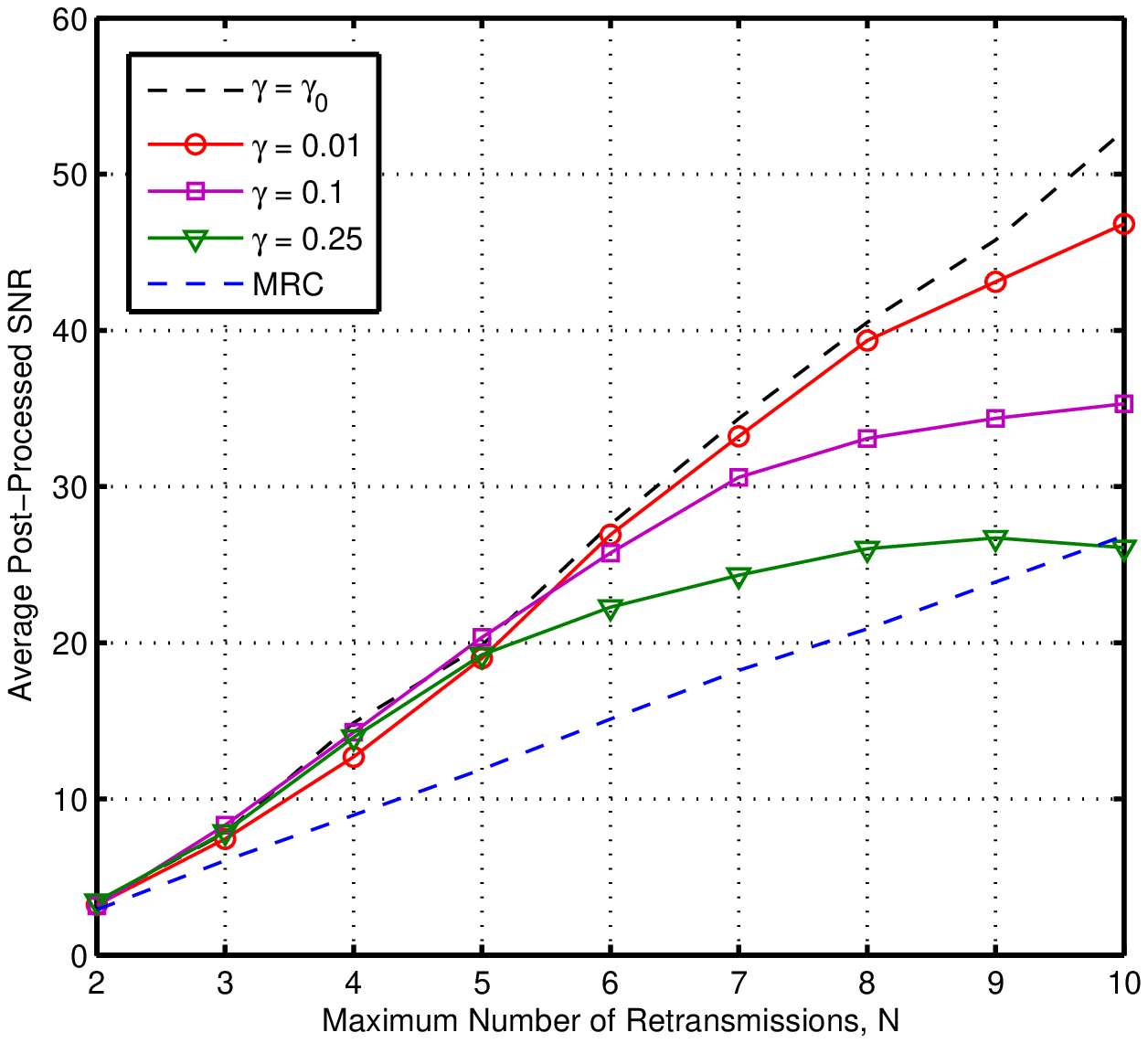}}
\label{gamma_plots}
\caption{(a) Average post-processed SNR performance of linear feedback combining (LFC) versus
maximal ratio combining. (b) Average post-processed SNR performance of linear feedback combining using
different values of $\gamma$.}
\end{figure}

As mentioned above, $\gamma_{0}$ changes with blocklength, $N$, and therefore
should be chosen appropriately.  However, in hybrid-ARQ, the required number of
retransmissions is often not known ahead of time.  Despite this fact, not having
this knowledge a priori provides very little penalty to performance. If the
number of retransmissions is not assumed to be predetermined, $\gamma$ can be
approximately chosen using the feedback noise variance $\sigma^2$ and the
average transmit power $\rho$.  The next figure, Fig. \ref{gammasnr},
illustrates the effect of fixing $\gamma$. As is illustrated, fixing $\gamma$
with respect to blocklength yields little performance degradation as long as
$\gamma$ is chosen appropriately.  The average post-processed SNR for $\gamma =
0.01$ performs very close to the scheme when using $\gamma_{0}$ from Fig.
\ref{gammafig}.  Note that Fig. \ref{gammasnr} has been plotted on a linear
scale to help display the comparison.

\section{Multiple Antenna Scenarios}\label{sec:mas}
In this section, we show how the feedback scheme for SISO systems
can be implemented in both MISO and MIMO systems with current CSI
at the source. In addition, an extension of the scheme is given for MIMO
systems with perfect COI and only outdated CSI at the source. However,
first we look at a MISO system with current quantized CSI along
with perfect COI available at the source.

\subsection{MISO with Current, Quantized Channel State Information at the
Source} Consider a MISO discrete-time system (Fig. \ref{miso_main}) with $M_t$ transmit antennas and only one
receive antenna, where the received packet, $\by[k] \in \mathbb{C}^{1\times L}$ is given by
\begin{equation}\label{miso1}
\by[k] = \bh^T [k] \bX[k] + \bz[k], \quad k = 1, \ldots N,
\end{equation}
where $\bh[k] \in \mathbb{C}^{M_t \times 1}$ is the channel gain vector, $\bX[k] \in
\mathbb{C}^{M_t \times L}$ is the transmitted packet matrix where the columns
correspond to channel uses and the rows correspond to antennas, and $\bz[k] \in
\mathbb{C}^{1\times L}$ is additive noise during the $k^{th}$ retransmission
with distribution $\cC\cN(\mathbf{0},\bI)$.  Furthermore, the power constraint
at the source is given as $E [\mathrm{tr}(\bX^*[k]\bX[k])] \leq
L\rho$, and it is assumed that there is perfect CSI at the destination.
However, the source no longer has access to perfect CSI. The
destination only feeds back the beamforming vector to be used for
current packet retransmission. The previous channel quality information ($\bh^T[k-1]\bw[k-1]$) along with the
unquantized channel output is also fed back to the source.

\begin{figure*}
\centering
\includegraphics[scale = 0.5]{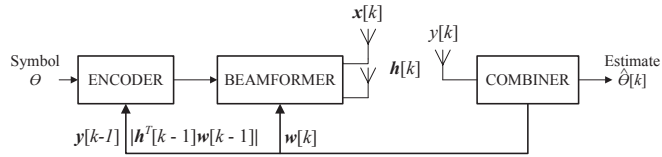}
\caption{System level block diagram for MISO system with quantized CSI
and COI feedback.  The feedback information can be described as COI ($\by[k-1]$), channel quality information ($\bh^T[k-1]\bw[k-1]$), and
the beamforming vector ($\bw[k]$).}\label{miso_main}
\end{figure*}

The transmitted packet matrix $\bX[k]$ is now generated as an outer product by
\begin{equation}\label{miso2}
\bX[k] = \bw[k]\widetilde{\bx}[k],
\end{equation}
where $\bw[k] \in \mathbb{C}^{M_{t} \times 1}$ denotes the unit norm beamforming vector to be used during
retransmission $k$ and $\widetilde{\bx}[k]\in \mathbb{C}^{1 \times L}$ is the
effective SISO signal during retransmission number $k$. The power constraint on
$\bX[k]$ now is equivalent to
\begin{eqnarray}
\nonumber E[\mathrm{tr}(\bX^*[k]\bX[k])] & = & \mathrm{tr}\left(E[\widetilde{\bx}^*[k]\bw^*[k]\bw[k]
\widetilde{\bx}[k]]\right)\\
\nonumber & = & E[|\widetilde{\bx}[k]|^2]\\
& \leq & L\rho.
\end{eqnarray}
At this point, it is again assumed that $L = 1$ for simplicity which reduces
$\by[k], \widetilde{\bx}[k]$, and $\bz[k]$ to scalars $y[k], \widetilde{x}[k]$,
and $z[k]$. We now follow the standard model for limited feedback beamforming by
constraining the design of beamforming vector $\bw[k]$ for packet transmission
$k$ to a codebook $\cF[k]$ containing $2^B$ unit vectors \cite{love2}. We denote the codebook
$\cF[k]$ as
\begin{equation}
\cF[k] = \left\{ \bff _1[k], \ldots, \bff _{2^B}[k]\right\}, \quad \lVert
\bff _j[k] \rVert = 1, 1 \leq j \leq 2^B.
\end{equation}
We can use any scheme available in literature to generate the unit beamforming
vectors including random vector quantization (RVQ)~\cite{honig},\thinspace
\cite{honig2} and Grassmannian line packing~\cite{love3,Mukka}. This codebook is
accessible to both the source and destination simultaneously. For RVQ, there
must be a random seed that is made available to both the source and destination
before the communication starts.

The destination decides on the beamforming vector that the source
uses during the $k^{th}$ retransmission by solving the following
channel quality maximization problem
\begin{equation}
\bw[k] = \argmax_{\bff _j[k] \in \cF[k]}\left|\bh^T[k] \bff_j[k]\right|^2.
\end{equation}
Effectively, the destination chooses the unit vector $\bw[k]$ in the
codebook $\cF [k]$ along which the channel vector $\bh[k]$ has the largest projection. The
information about $\bw[k]$ is conveyed back to the source in just $B$
bits. The limited feedback capacity~($C_{\rm{LF}}$) for a given codebook
design $\{\cF[k]\}_{k = 1}^{\infty}$ can be expressed by
\begin{equation}
C_{\rm{LF}} = E\left[\max_{\bff _j[k] \in \cF[k]}\log_2(1 + \rho |\bh^T[k]
\bff_j[k]|^2)\right].
\end{equation}
Using the monotonicity of the logarithmic function, $C_{\rm{LF}}$ can be
simplified to
\begin{eqnarray}
\nonumber C_{\rm{LF}} & = & E\left[\log_2(1 + \rho \max_{\bff _j[k] \in
\cF[k]}|\bh^T[k] \bff_j[k]|^2)\right]\\
& = & E\left[\log_2(1 + \rho |\bh^T[k]
\bw[k]|^2)\right].
\end{eqnarray}
As the number of feedback bits $B$ approach infinity, $C_{\rm{LF}} \rightarrow
C_{\rm{MISO}}$, where $C_{\rm{MISO}} = E\left[\log_2(1 + \rho\lVert\bh
\rVert^2)\right]$. This is because limited CSI feedback becomes perfect CSI
feedback for any codebook design with an infinite number of feedback bits. With
the selection of beamforming vector $\bw[k]$ as described above and packet
length $L = 1$, the received signal $y[k]$ is given as
\begin{equation*}
y[k] = \bh^T [k] \bw[k] \widetilde{x}[k] + z[k], \quad k = 1, \ldots N.
\end{equation*}
Pre-multiplying the received signal $\by[k]$ by $e^{-j\angle \bh^T [k] \bw[k]}$,
we obtain
\begin{equation*}
\widetilde{y}[k] = \left|\bh^T [k] \bw[k]\right|\widetilde{x}[k] +
\widetilde{z}[k], \quad k = 1, \ldots N,
\end{equation*}
where $\widetilde{y}[k] = y[k]e^{-j\angle \bh^T [k] \bw[k]}$ and $\widetilde{z}[k]$ is distributed as $\cC\cN(0,1)$. If we let $\widetilde{\lambda}[k] = |\bh^T
[k] \bw[k]|$, we get the overall system in (\ref{miso1}) as
\begin{equation}
\widetilde{y}[k] = \widetilde{\lambda}[k]\widetilde{x}[k] + \widetilde{z}[k], \quad k =
1, \ldots N,\label{tildes}
\end{equation}
\noindent Finally, gathering all packet retransmissions together as in
(\ref{sys}), we can rewrite (\ref{tildes}) as
\begin{equation}
\widetilde{\by}= \widetilde{\bLambda}\widetilde{\bx} + \widetilde{\bz},
\end{equation}
\noindent where $\widetilde{\bLambda} =
\mathrm{diag}(\widetilde{\lambda}[1],\widetilde{\lambda}[2],\ldots,\widetilde{\lambda}[N])$.
With this formulation, the MISO system is equivalent to the SISO system in
(\ref{sys}); therefore, the SISO scheme can be implemented by replacing the
role of $h[k]$ with $\widetilde{\lambda}[k]$ (see Fig. \ref{miso_main}).

It can be proven in a similar way to Lemma 1 that the MISO scheme meets the average power constraint.  Also, it can be
shown that if the feedback channel is perfect, the MISO scheme achieves the capacity of the channel and obtains a
doubly exponential decay in error probability.  However, to avoid redundancy,
this proof is only given for the MIMO case in the next section (Lemma 4). The
effects of using different vector quantization techniques and the overall
performance of the MISO scheme are now presented in an example.

\begin{figure}
	\centering
		\includegraphics[scale = 0.5]{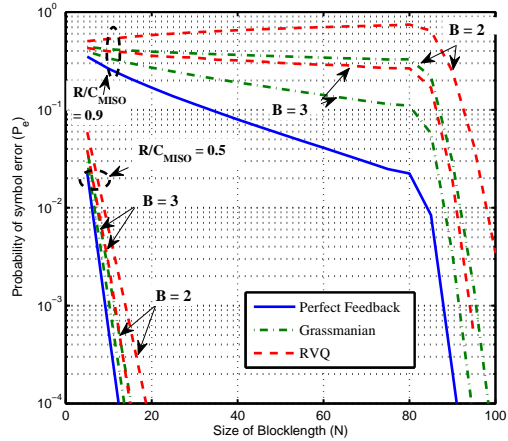}
	\caption{Variation of probability of error with the number of retransmissions
	for i.i.d. Rayleigh fading at $\rho = 0$\thinspace dB and $M_t = 2$. The
	performance of RVQ, Grassmanian line packing, and perfect feedback are compared
	for $R/C_{\rm{MISO}} = 0.5$ and $R/C_{\rm{MISO}} = 0.9$ with $B = 2$ and $B = 3$.}
	\label{fig3_miso}
\end{figure}

\subsection*{Example 2}
To illustrate the potential of our scheme, consider a MISO system
communicating over an i.i.d. Rayleigh block fading channel with each entry of
$\bh[k]$ distributed as $\cC \cN({0, 1})$. In this example, the COI feedback
is assumed to be noiseless (i.e., $\sigma^2 = 0$).  Using a limited
CSI feedback framework, Fig. \ref{fig3_miso} plots the packet probability of
error curves against the number of retransmissions for two different normalized
rates of $0.5$ and $0.9$ where normalized rate is the ratio of the rate of
transmission to the channel capacity ($R/C_{\rm{MISO}}$). The plots are for
$\rho = 0$\thinspace dB with a two-antenna source averaged over $10^6$
i.i.d. fading realizations. The doubly exponential decay of the curves are
clearly visible for all the feedback schemes: perfect CSI feedback and
quantized CSI feedback -- RVQ and Grassmanian line packing. Even with
quantized CSI feedback and moderate normalized rate of $0.5$, only a few
retransmissions are required to achieve a very low packet error rate of $1\%$
for both RVQ and Grassmanian line packing.

\subsection{MIMO with Current Channel State Information at the Source}

\begin{figure*}
\centering
\includegraphics[scale = 0.49]{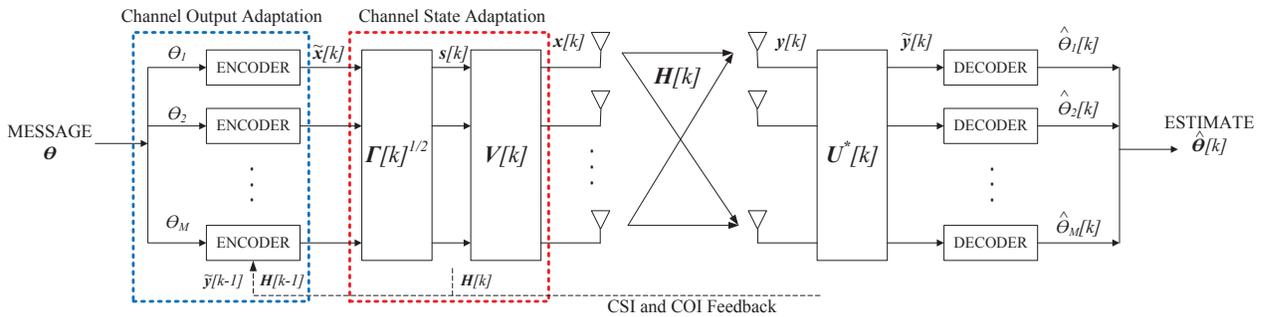}
\caption{System level block diagram when channel state is
known perfectly at both the source and destination.}\label{mimo_sep}
\end{figure*}

Consider now a MIMO packet retransmission system (Fig. \ref{mimo_sep}) with
$M_{t}$ transmit antennas and $M_{r}$ receive antennas where the number of
spatial channels available is $M = \min(M_{r},M_{t})$. The received matrix,
$\bY[k] \in \mathbb{C}^{M_{r}\times L}$, is given by
\begin{equation}
\bY[k] = \bH[k]\bX[k] + \bZ[k],\label{mimosys}
\end{equation}
\noindent where $\bX[k] \in \mathbb{C}^{M_{t}\times L}$ is, as in MISO, the
transmit packet matrix, $\bH[k] \in \mathbb{C}^{M_{r}\times M_{t}}$ is the block
Rayleigh fading channel matrix whose entries are i.i.d. zero-mean complex
Gaussian random variables with unit variance, and $\bZ[k] \in
\mathbb{C}^{M_{r}\times L}$ is an additive noise matrix with i.i.d. zero-mean
complex Gaussian entries with unit variance. Note that due to the availability
of multiple spatial channels, the total packet length has increased to contain
$ML$ symbols with $M$ symbols transmitted over each channel use. Again, for the
sake of simplicity, we assume that $L = 1$ which reduces $\bY[k], \bX[k],$ and
$\bZ[k]$ to column vectors $\vec{\by}[k], \vec{\bx}[k],$ and $\vec{\bz}[k]$.
When the current block fading matrix is known both at the source and
destination, we can effectively diagonalize the channel. Let
\begin{equation}\label{svd1}
\bH[k] = \bU[k] \mathbf{\Lambda}[k] \bV^{*}[k],
\end{equation}
be a compact singular value decomposition~(SVD) of the channel matrix
$\bH[k]$, where $\bU[k] \in \mC^{M_r \times M}, \mathbf{\Lambda}[k] \in \mC^{M
\times M},$ and $\bV[k] \in \mC^{M_t \times M},$ with
\begin{equation}\label{svd2}
\mathbf{\Lambda}[k] = \mathrm{diag}\left(
\lambda_1[k], \ldots , \lambda_M[k] \right),\quad \lambda_1[k] \geq \lambda_2[k]
\ldots \geq \lambda_M[k] \geq 0,
\end{equation}
\begin{equation}\label{svd3}
\bU^* [k] \bU[k] =  \bV^* [k]\bV[k] = \bI.
\end{equation}
We can design the source vector $\vec{\bx}[k]$ as
\begin{equation}\label{x1}
\vec{\bx}[k] = \bV[k]\bs[k],
\end{equation}
where $\bs[k] \in \mathbb{C}^{M \times 1}$ with $\bV[k]$ defined
by~(\ref{svd1}) and~(\ref{svd3}). Also pre-multiplying the received vector by
$\bU^*[k]$, we obtain the effective system described by~(\ref{mimosys}) as
\begin{eqnarray}
\nonumber \bU^{*}[k]\vec{\by}[k] & = & \bU^{*}[k] \bH[k] \bV[k] \bs[k] +
\bU^{*}[k]\vec{\bz}[k]\\
\widetilde{\by}[k] & = & \mathbf{\Lambda}[k]\bs[k] + \widetilde{\bz}[k],
\label{effmimosys}
\end{eqnarray}
where $\widetilde{\by}[k] \in \mC^{M \times 1}$ and $\widetilde{\bz}[k] \in
\mC^{M \times 1}$. The effective noise $\widetilde{\bz}[k]$ is distributed as
$\mathcal{CN}({\bf 0},\bI)$ due to the rotational invariance of complex i.i.d. Gaussian vectors.
Due to the a priori knowledge of the channel at the source, spatial
waterfilling can be performed across the $M$ parallel spatial channels for
each packet transmitted. The entries of the waterfilling matrix
$\mathbf{\Xi}[k] = \textrm{diag}\left(
\xi_1[k], \ldots , \xi_M[k] \right)$ are defined as
\begin{equation}\label{wfill2}
\xi_i[k] = \max\left(0, \frac{1}{\xi_0[k]} -
\frac{1}{{\lambda^2_i}[k]}\right), \quad 1 \leq i \leq M.
\end{equation}
The value of the constant $\xi_0[k]$ is the water-filling level chosen
to satisfy the power constraint
\begin{equation}\label{wfill3}
\sum_{i = 1}^{M}\xi_i[k] = 1.
\end{equation}
Furthermore, the capacity $C_{\rm{TR}}$ of a MIMO channel with the fading matrix
known both at the source and destination can be written as
\begin{equation}
C_{\rm{TR}} = \sum_{i = 1}^{M}E\left[\log_2(1 + \rho
\xi_i{\lambda}_i^2)\right], 
\end{equation}
where we have dropped the retransmission index $k$ due to the i.i.d. nature of
the block fading matrix. With current CSI at the source and
destination, the overall channel capacity of the MIMO channel can be
expressed as a sum of $M$ parallel non-interfering SISO spatial channels
each with capacity $C_i$ where $C_i = E\left[\log_2(1 + \rho
\xi_i{\lambda}_i^2)\right], \quad 1 \leq i \leq M.$

With the aid of the waterfilling matrix defined in (\ref{wfill2}),
(\ref{effmimosys}) can now be written as
\begin{eqnarray}
\nonumber \widetilde{\by}[k] & =
& {\mathbf{\Lambda}}[k]\mathbf{\Xi}[k]^{1/2}\widetilde{{\bx}}[k] +
\widetilde{\bz}[k],
\end{eqnarray}
where $\bs[k] = \mathbf{\Xi}[k]^{1/2}\widetilde{{\bx}}[k]$. Note that the
spatial waterfilling~(or power adaptation) does not make use of the COI fed back
to the source at all. Letting $\widetilde{\mathbf{\Lambda}}[k] =
{\mathbf{\Lambda}}[k]\mathbf{\Xi}[k]^{1/2}$, the overall system can be
represented in matrix form as
\begin{equation}\label{effsys1}
\widetilde{\by}[k] = \widetilde{\mathbf{\Lambda}}[k]\widetilde{{\bx}}[k] +
\widetilde{\bz}[k].
\end{equation}
We next transmit $M$ symbols over $M$ parallel spatial channels by exploiting
the COI and previous CSI available at the source using a maximum of $N$
transmissions.  In other words, with (\ref{effsys1}), we can implement $M$
parallel instances of the COI feedback SISO scheme---one for each spatial
channel. Similar to the MISO case, we replace the role of $h[k]$ with
$\widetilde{\lambda}_{i}[k]$ for the $i^{th}$ spatial channel.

It is quite possible that each of the source constellations~$\varTheta_i[N]$ has
a different number of constellation points; note that $\varTheta_i[N]$ denotes
the source constellation used for the $i^{th}$ spatial channel. The number of
equally likely constellation points chosen for each channel depends on the
spatial capacity $C_i$ of the subchannel. Therefore, the number of
constellations points must be less than $2^{NC_i}$.

The overall schematic of the proposed scheme, shown in
Fig. \ref{mimo_sep}, clearly demonstrates the independent constellation
mapping of each of the $M$ symbols of packet $\btheta$
along with the separation of the channel output adaptation from current channel
state adaptation. Furthermore, it can be shown that, if the feedback channel is perfect, any rate less than capacity
can be achieved by the above scheme at doubly exponential rate.

\begin{lemma}\label{lem4}
If $\sigma^2 = 0$, the proposed scheme achieves any rate $R < C_{\rm{TR}}$.
Viewing the rate $R$ as a sum of $M$ spatial channel
rates, $R = \sum_{i = 1}^{M}R_i,$ the coding scheme can achieve any rate $R_i < C_i$ for
the $i^{th}$ spatial channel. Furthermore the probability of error~($P_e$) for
the packet decays doubly exponentially as the function of the number
of transmissions $N$. In other words, for sufficiently large $N$,
\begin{equation*}
P_e \leq \beta_1\exp\left(-2^{(N\beta_2 + \beta_3)}\right),
\end{equation*}
where $\beta_1$ and $\beta_2$ are positive constants, while $\beta_3$ is a real
constant for a given rate $R$.
\end{lemma}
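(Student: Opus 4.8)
The plan is to reduce the MIMO claim to the SISO analysis already in place, and then to quantify how fast the post-processed SNR grows on each spatial channel. By construction, the scheme in (\ref{effsys1}) runs $M$ statistically independent copies of the perfect-COI SISO code, one per eigen-channel, with $h[k]$ replaced by $\widetilde{\lambda}_i[k]$. Since the packet is in error exactly when at least one of the $M$ symbols is mis-decoded, a union bound gives $P_e \le \sum_{i=1}^{M} P_{e,i}$, so it suffices to prove a doubly exponential bound for each spatial channel; the rate statement $R=\sum_i R_i<\sum_i C_i=C_{\rm{TR}}$ is then immediate once each $R_i<C_i$ is handled. From here I fix a spatial channel and drop the index $i$.

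First I would pin down the per-channel estimation error. Starting from the unbiased estimator in (\ref{theta_ub}), the error $\widehat{\theta}^u[N]-\theta$ is, conditioned on the realized gains, a zero-mean complex Gaussian whose variance is a weighted sum of the i.i.d. noise terms $z[m]$. Writing $P_k=\prod_{m=1}^{k}(1+\rho|h[m]|^2)$ so that $|\phi[k]|^2=P_k^{-1}$, and using the telescoping identity $P_{m-1}|h[m]|^2=(P_m-P_{m-1})/\rho$, the weighted sum collapses and the error variance reduces to $\rho/(P_N-1)$. Since $E[|\theta|^2]=\rho$, this says the post-processed SNR after $N$ transmissions is exactly $P_N-1=\prod_{m=1}^{N}(1+\rho\widetilde{\lambda}^2[m])-1$, i.e. the SNR grows like the product of the per-block SNRs, which is the engine of the doubly exponential behavior.

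Next I would convert SNR growth into an error bound. The symbol $\theta$ is drawn from a power-$\rho$ constellation with at most $2^{NR}$ points, so its minimum distance satisfies $d_{\min}^2\ge c_0\,\rho\,2^{-NR}$; a nearest-neighbor union bound with the Gaussian tail $Q(t)\le e^{-t^2/2}$ then yields, conditioned on the channel, $P_e\le 2^{NR}\exp\!\big(-c_1 (P_N-1)2^{-NR}\big)$. It remains to control $P_N$ over the random fading. Writing $\tfrac1N\log_2 P_N=\tfrac1N\sum_{m=1}^N\log_2(1+\rho\widetilde{\lambda}^2[m])$, the summands are i.i.d. with mean equal to the spatial capacity $C=E[\log_2(1+\rho\widetilde{\lambda}^2)]$, so for any $R<C$ one picks $\epsilon>0$ with $R<C-\epsilon$ and invokes a law-of-large-numbers/concentration argument to guarantee $P_N\ge 2^{N(C-\epsilon)}$ on a high-probability event. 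On that event $(P_N-1)2^{-NR}\ge \tfrac12 2^{N(C-R-\epsilon)}$, so the conditional error is $\le 2^{NR}\exp(-c_2 2^{N(C-R-\epsilon)})$, which is doubly exponential; matching to the stated form gives $\beta_2=C-R-\epsilon>0$ and $\beta_3$ collecting the inner constants, while the singly-exponential prefactor $2^{NR}$ is dominated for large $N$ and absorbed into $\beta_1$. Summing over the $M$ channels preserves this form.

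The hard part is the last step: taming the fading inside the exponent. Conditioned on a fixed channel the decay is genuinely doubly exponential (exactly as in the AWGN scheme of \cite{liu1}), but the unconditional average must cope with atypical realizations where $P_N$ is small. The event $\{\tfrac1N\log_2 P_N\le R\}$ has only singly-exponential probability $e^{-N I(R)}$ by large deviations, and on it the error is $O(1)$; a fully rigorous unconditional bound therefore carries an additive singly-exponential outage term, so the clean doubly exponential statement really describes the behavior on the dominant typical set, which is precisely what the decay curves in the examples display. I would accordingly either state the bound conditioned on the typical channel event, or tune $\epsilon$ and the constellation so that the doubly exponential term governs the regime of interest, while flagging the outage contribution explicitly.
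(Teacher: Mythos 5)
Your proof is correct and, at its skeleton, follows the same route as the paper's: reduce to $M$ per-eigenchannel SISO problems (the paper treats each $R_i < C_i$ separately and implicitly union-bounds, picking $\epsilon < \min_i \frac{1}{2}(1 - R_i/C_i)$ at the end), note that the unbiased estimator's error is conditionally Gaussian, bound the symbol error through the square-QAM minimum distance and a Gaussian tail, and invoke the strong law of large numbers to make the argument of the exponential grow like $2^{N(C_i - R_i - O(\epsilon))}$. Two of your choices, however, genuinely improve on the paper's execution. First, the telescoping identity: the paper never observes that the conditional error variance collapses exactly to $\rho/(P_N-1)$ with $P_N = \prod_{m=1}^{N}\bigl(1+\rho\widetilde{\lambda}^2[m]\bigr)$; it instead carries the three-factor expression $\rho^2\,\widetilde{\phi}^4[N]\,(1-\widetilde{\phi}^2[N])^{-2}\sum_{k}\widetilde{\lambda}^2[k]/\widetilde{\phi}^2[k-1]$ and must control each piece with a separate almost-sure bound (an SLLN bound forcing $\widetilde{\phi}_i[N] < 2^{-\frac{1}{2}N(1-\epsilon)C_i}$, the bound $1-\widetilde{\phi}_i^2[N] > 1/\sqrt{3}$, and an SLLN bound on $\sum_k\widetilde{\lambda}_i^2[k]$, plus monotonicity of $\widetilde{\phi}_i$), so your version needs one law-of-large-numbers statement where the paper needs three; your singly exponential prefactor $2^{NR}$ (versus the paper's constant $4$ from its square decision region) is immaterial, as both are swallowed by the doubly exponential term. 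Second, your closing caveat is not a weakness of your argument but a correct diagnosis of an imprecision in the paper's: the paper asserts statements of the form $P(\cdot)=1$ for all $N > N_j$ with deterministic thresholds $N_j$, which is not what the SLLN gives---almost-sure convergence only supplies a realization-dependent threshold $N(\omega)$ beyond which the bounds hold. Consequently the lemma, in both treatments, is really a statement about the error probability conditioned on the fading sequence, valid almost surely for all sufficiently large $N$; the ensemble-averaged error probability is dominated by the outage event $\bigl\{\frac{1}{N}\log_2 P_N \le R_i\bigr\}$, whose probability decays only singly exponentially in $N$, so no unconditional doubly exponential bound can hold. Your proposal to state the result on the typical channel event is precisely the reading under which the paper's lemma is true, and making that conditioning explicit is more honest than the paper's phrasing.
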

\begin{proof}
See Appendix.
\end{proof}

\subsection{MIMO with Outdated Channel State Information at the Source}
In the case when there are multiple antennas at both the source and
destination and the source has access to only outdated CSI, a direct extension of
the SISO scheme for perfect COI (Lemma \ref{lem2}) can be made. Using the same
system setup as in (\ref{mimosys}), if $L = 1$, we can write the feedback scheme recursively as
\begin{eqnarray}
\label{recenc}
\vec{\bx}[k + 1] & = & \left(\bI +
\rho\bH^*[k]\bH[k]\right)^{-1/2}\left(\vec{\bx}[k] -
\rho\bH^*[k]\vec{\bz}[k]\right)\\ \label{recdec}
\widehat{{\textrm{{\boldmath{$\theta$}}}}}[k] & = & \left(\bI -
\mathbf{\Phi}[k]\mathbf{\Phi}^*[k] \right)\textrm{{\boldmath{$\theta$}}} + \rho
\mathbf{\Phi}[k]\mathbf{\Phi}^*[k]\sum_{m = 1}^{k}(\mathbf{\Phi}^{-1}[m -
1])^{*}\bH^*[m]\vec{\bz}[m],
\end{eqnarray}
\noindent where
\begin{equation}
\mathbf{\Phi}[k] =  \left(\bI + M\rho\bH^*[1]\bH[1] \right)^{-1/2}\cdots
\left(\bI + M\rho\bH^*[k]\bH[k] \right)^{-1/2}.
\end{equation}

Unfortunately, even if the feedback channel is perfect but only outdated
CSI is available at the source, it is difficult to prove a result
similar to Lemma 4.  Although it can be shown for some positive rates a doubly exponential
decay of probability of error is achievable, it has not been proven for all
rates below capacity. We now broaden our focus back to the view of the whole
hybrid-ARQ scheme in the next section.

\section{The Hybrid-ARQ Scheme and Variations}\label{sec:harq}

Rather than focusing on the packet estimate, $\widehat{\btheta}[k]$, we now
consider the overall hybrid-ARQ scheme including the FEC. For the FEC, we assume the use of a systematic turbo
code; although, any systematic block code can work.  Because of this choice, we
will perform only soft decoding at the output of the packet combining step.
This means for each symbol in the estimated packet, $\widehat{\theta}_{i}[k]$,
we will form a set of log-likelihood ratios as
\begin{equation}
LLR_{i} = \left\{\log\left[\frac{p(\theta_{i} = \psi_{j}|\hat{\theta}_{i}[k])}{\sum_{\ell \neq j}p(\theta_{i} = \psi_{\ell}|\hat{\theta}_{i}[k])}\right]: j = 1,2,\ldots,\left|\Theta[N]\right|\right\},
\end{equation}
\noindent where $\psi_{j}$ for $j = 1,2,\ldots,\left|\Theta[N]\right|$ are the
points of the constellation $\Theta[N]$ utilized for modulation.  Upon
calculating these sets, they are then passed to the turbo decoder for decoding.
The specific turbo code implemented is a rate $L_{\textrm{info}}/L_{\textrm{coded}} = 1/3$ turbo code defined in the UMTS
standard; more details are given in the Simulations section.

Now, we introduce different configurations of the overall scheme that might help adapt
to different circumstances (e.g., feedback link rate, transmit/receive duration,
etc.).  To do so, we look at varying the amount of COI feedback sent to the
source; this is also done to illustrate the trade-off between performance (e.g.,
throughput, FER, etc.) and the amount of information fed back. Note that the
case of CSI-only feedback has already been explored in the literature; see for
example \cite{SuMa06, SaDi06, SuDi06, SuDi07, KiKa09}. Therefore the emphasis
here is in varying the amount of COI feedback.

The most straightforward of the possible COI feedback configurations is one
where the destination simply feeds back everything it receives without
discrimination. This utilizes a noiseless/noisy version of the full received packet for feedback
information; hence, we will refer to this method as \emph{full packet feedback}
(FPF).  Alternatively, one can alter FPF by implementing a well-known concept in
hybrid-ARQ with feedback \cite{Matsu}; instead of feeding back all the symbols
of the received packet, we can instead feed back only the $T$ ``least
reliable'' symbols with their indices.  The measure of
``reliability'' can be based off metrics such as the log-likelihood ratio (LLR)
or the logarithm of the a posteriori probabilities (log-APP) \cite{Shea}.  Since only some of the symbols
in the packet are fed back, we will refer to this scheme as \emph{partial packet
feedback} (PPF).

\subsection{Full Packet Feedback~(FPF)}
In FPF, we propose a hybrid-ARQ scheme where the source is assumed to have
access to a noiseless/noisy version of the last received packet. The performance
of the perfect COI feedback scenario is used to demonstrate the maximum
possible gains that can be achieved with the addition of channel output feedback. To help explicitly
show the feedback information available, we now introduce $\br[k]$ as the
channel output feedback side-information available at the source at
packet transmission $k$. In FPF,
\begin{equation}
\br[k] = \by[k-1] + \bn[k-1],
\end{equation}
\noindent where, in this case, $\br[k] \in \mathbb{C}^{1\times L}$.

As mentioned before, the first transmission of packet
$\textrm{{\boldmath{$\theta$}}}$ is assumed to be a codeword of a FEC code.  If
a NACK is received at the source, each subsequent packet is encoded symbol-wise
by the linear feedback code described in Section III.  This is used to refine
the destination's estimate of each symbol in the original packet.  To display
the performance of the scheme, we look at comparing the normalized throughput of
this scheme with the turbo-coded hybrid-ARQ used in \cite{HSDPA}. This standard
uses a rate-compatible punctured turbo code to encode the packet. Specifically,
it uses a rate 1/3 UMTS turbo code \cite{UMTS} and then punctures it for use in
hybrid-ARQ. If sending one packet and $M$ spatial channels are available for the
MIMO setting, the assignment of $M$ symbols for $M$ spatial channels is done
arbitrarily. Note that it is plausible that using dynamic adaptive modulation
for each of the spatial channels or coordinating multiple
retransmissions\cite{SuMa06} might result in improvement in throughput. However,
we do not consider this here, but we point out that in most of the cases our
proposed scheme can be combined with the innovations on using CSI more
efficiently.

\subsection{Partial Packet Feedback~(PPF)}

For sake of practicality, it is desirable to minimize the amount of COI feedback
information needed to be sent back to the source.  As a step towards this, we
now look at the effects of limiting the size of the COI feedback packet. We try
to utilize the limited feedback channel in the most useful way by feeding back
not the complete packet but only relatively few of the symbols in the received
packet. As mentioned above, in the partial packet mode, the choice of COI
feedback information is based on the relative reliability of soft decoded bits.
This addition to the scheme is motivated by the technique used in \cite{Shea}
where it was shown that focusing on the least reliable information bits can
greatly improve the performance of turbo-coded hybrid-ARQ. The selection process
to construct the feedback packet, $\br[k]$, is performed at the destination
using the following method. The received packet $\by[k]$ (or $\bY[k]$ as in
(\ref{mimosys})) is combined with the $k - 1$ previous received packets using
MRC in the case of Chase combining or as described above if linear COI feedback
coding is employed. After combining, the destination now has an estimate of the
desired packet, $\widehat{\btheta}[k]$. This packet estimate is now passed on to
the turbo decoder, and its corresponding output is a set of LLRs for each
original information bit. For notation, we refer to the LLR produced by the
turbo decoder for the $i^{th}$ information bit, $m_i$, as $\ell_i$, which can be
mathematically written as
\begin{equation}
\ell_i = \log\left[ \frac{p(m_i = 0| \by[1])}{p(m_i = 1| \by[1])}\right], \quad
1 \leq i \leq L_{\rm info} .
\end{equation}

The least reliable bits are chosen as the $T$ bits whose LLR values
have the smallest magnitude, i.e., $|\ell_i| < \ell_{\rm th},$ where
$\ell_{\rm th}$ is chosen appropriately. The magnitude of $\ell_i$
close to zero indicates that the bit is almost equally likely to be either a 1 or a
0. Then, the set of symbols whose realizations are to be fed back is
\begin{equation}
I_{\rm{sym}} = \left\{\theta_k : m_i \in \theta_k \textrm{ and } |\ell_i| <
\ell_{\rm th}, \quad 1 \leq i \leq L_{\rm{info}}, \quad 1 \leq k \leq L\right\},
\end{equation}
\noindent meaning the symbol is chosen to be fed back if it contains one of
the least reliable information bits. Let $|I_{\rm sym}| = T_{\rm sym}$. With
this technique, we can then write the feedback packet, $\br_T[k]$, as
\begin{equation}
\br_T[k] = \by_{T}[k-1] + \bn_T[k-1],
\end{equation}
\noindent where
\begin{equation}
\by_{T}[k-1] = \left\{y_{i}[k-1]: \theta_i \in I_{\rm{sym}}\right\}.
\end{equation}
\noindent  Since $T_{\rm sym}$ channel outputs are being fed back, $\bn_T[k]$ is
of length $T_{\rm sym}$. Note that the selection process is straightforward in
this case because we assume the use of a systematic turbo code. It is also
important to note that the $T_{\rm sym}$ symbols are chosen only once (after the first
transmission). This process can be done after each retransmission but would
require more feedback resources. Finally, if it is assumed that the number of
channel uses per packet retransmission are constant, one can fill in~the
remaining $L - T_{\rm sym}$ channel uses in numerous ways. One particular way is what we will refer to as
partial packet feedback with partial Chase combining (PPF-PC). In this mode, on
the forward transmission, the new $T_{\rm sym}$ symbols generated for the
$T_{\rm sym}$ least reliable symbols based on our linear coding scheme are sent in conjunction with
the repetition of the remaining $(L - T_{\rm sym})$ other symbols used for
Chase combining.

\section{Simulations}\label{sec:sim}
In this section we present numerical simulations to demonstrate the improvements
possible with inclusion of our proposed linear COI feedback coding in hybrid-ARQ
schemes. We assume that the channel is i.i.d. Rayleigh block fading. We
limit the number of retransmissions to a maximum of four (i.e., $N = 4$). All
the throughput calculations are done by averaging over $10^3$ new packet
transmissions.

The metric defined for calculating normalized throughput is given as: \[\tau =
\frac{1}{E[B]},\] where $1 \leq B \leq N$ is the number of transmissions needed
for successful decoding of a packet $\btheta$.  This can be equivalently thought
of as a packet success rate or the inverse of the average number of packets
needed for successful transmission. If the number of retransmissions reaches the maximum number
before successful decoding, the throughput contribution is zero. Note that this
metric is meaningful only when comparing constant-length packet schemes. Also
the above throughput definition implies that as $\rho \rightarrow \infty, \tau
\rightarrow 1$ for all the protocols; including Chase and our proposed scheme.

\begin{figure}
\centering
\subfloat[MIMO Channel]{\label{fpf_plot}\includegraphics[scale = 0.55]{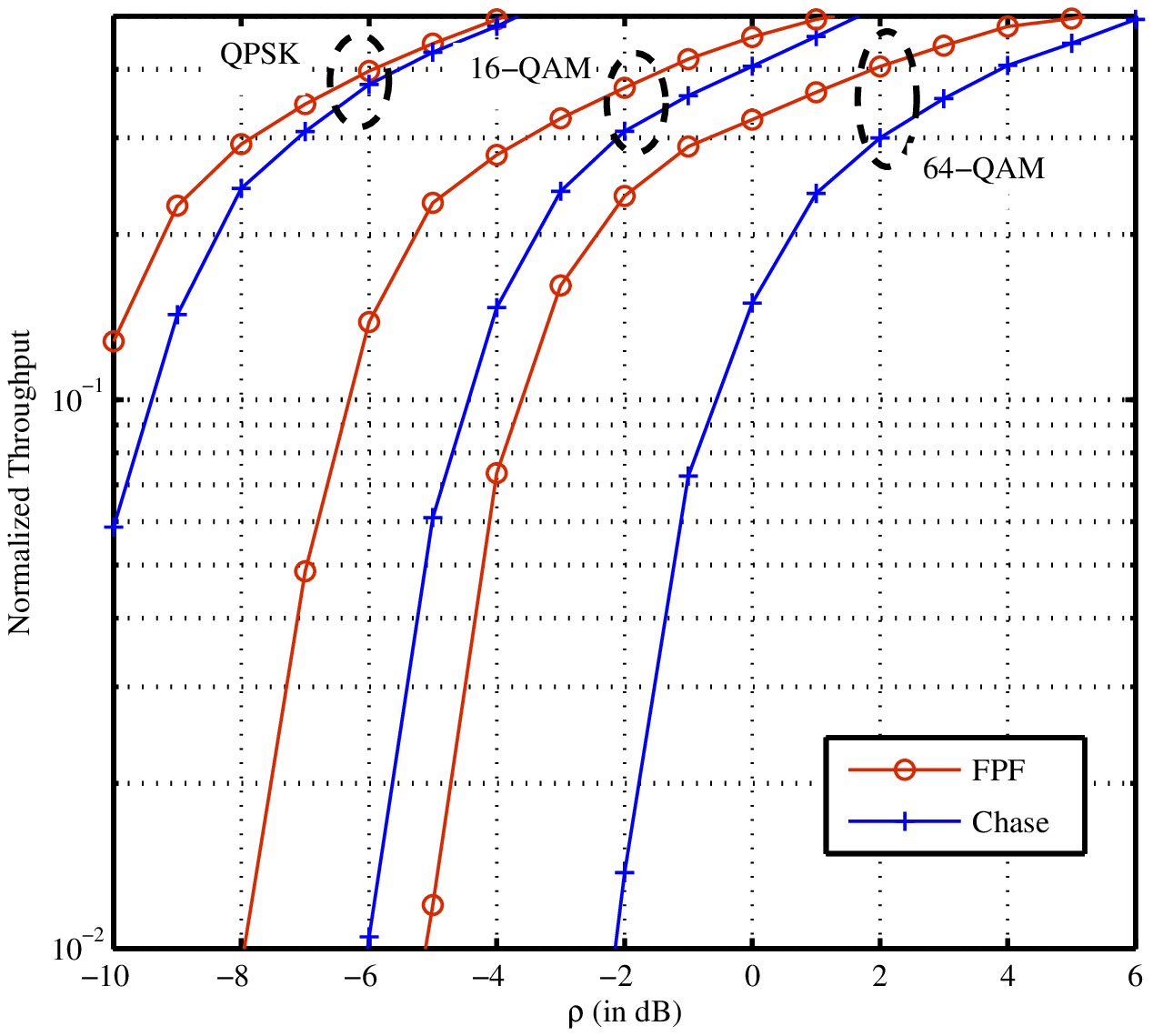}}
\subfloat[Partial Packet Feedback]{\label{ppf_plot}\includegraphics[scale = 0.55]{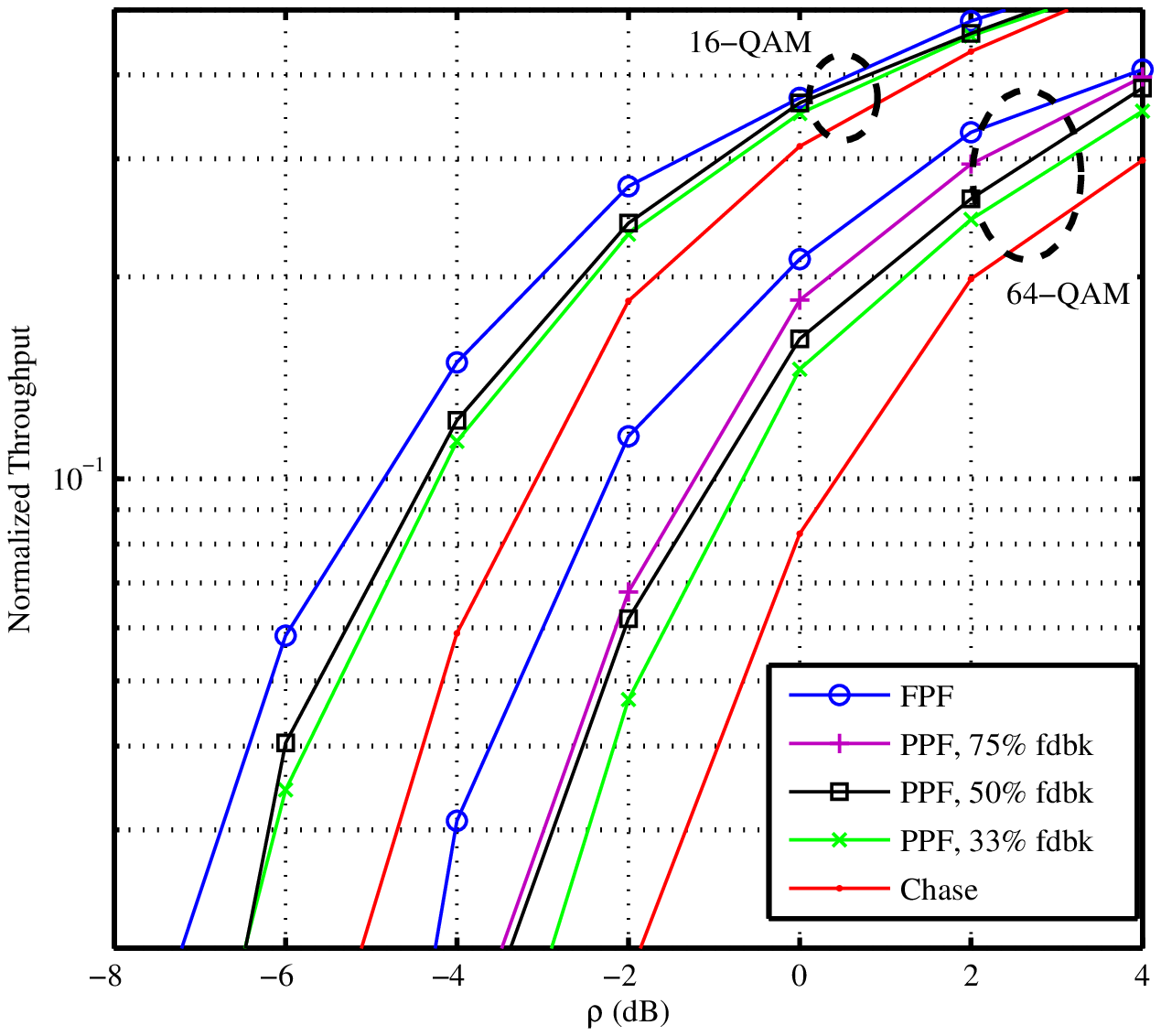}}
\newline
\subfloat[Partial COI]{\label{hsdpa_plot}\includegraphics[scale = 0.55]{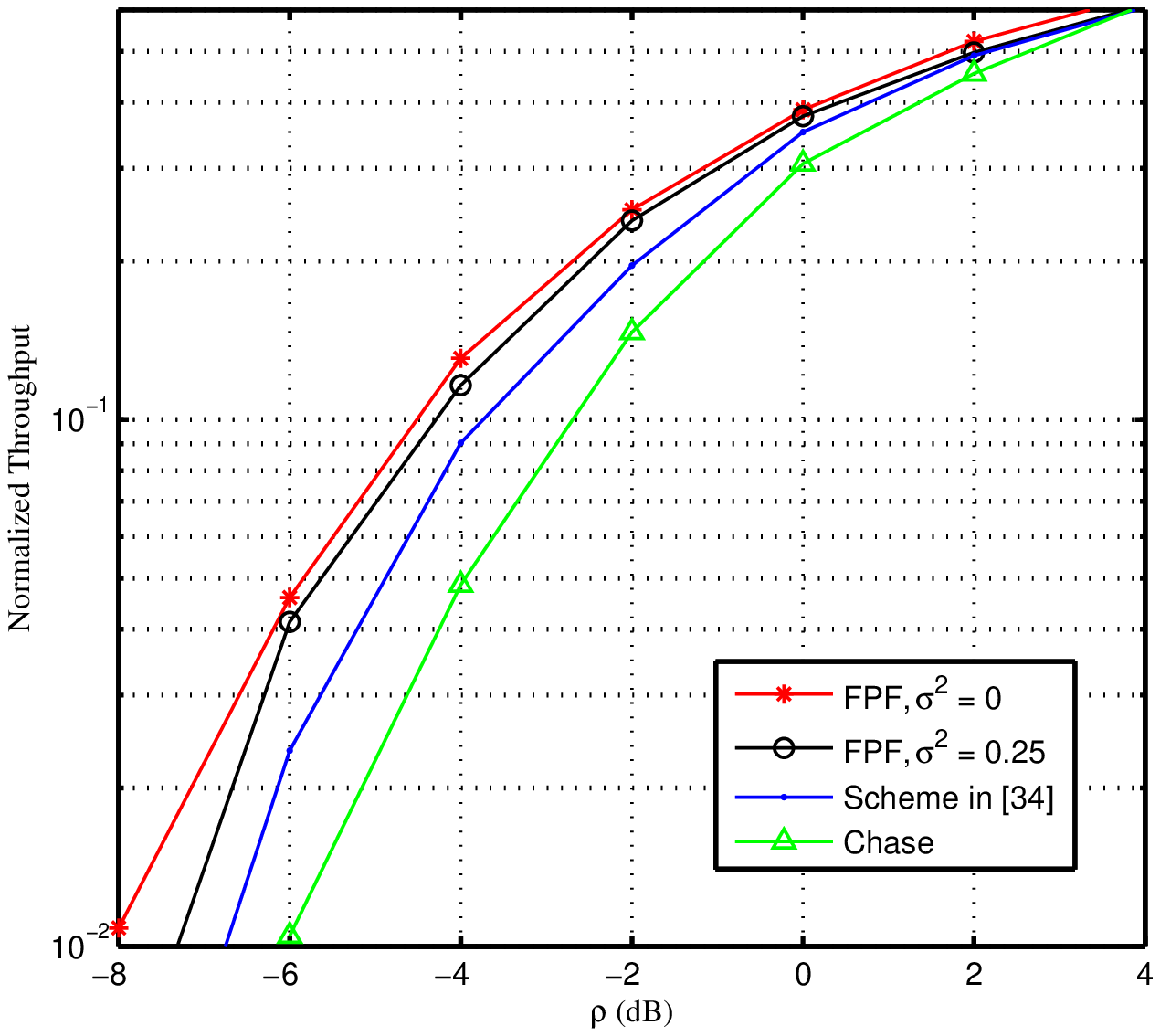}}
\subfloat[Quantized COI]{\label{quantized_plot}\includegraphics[scale = 0.55]{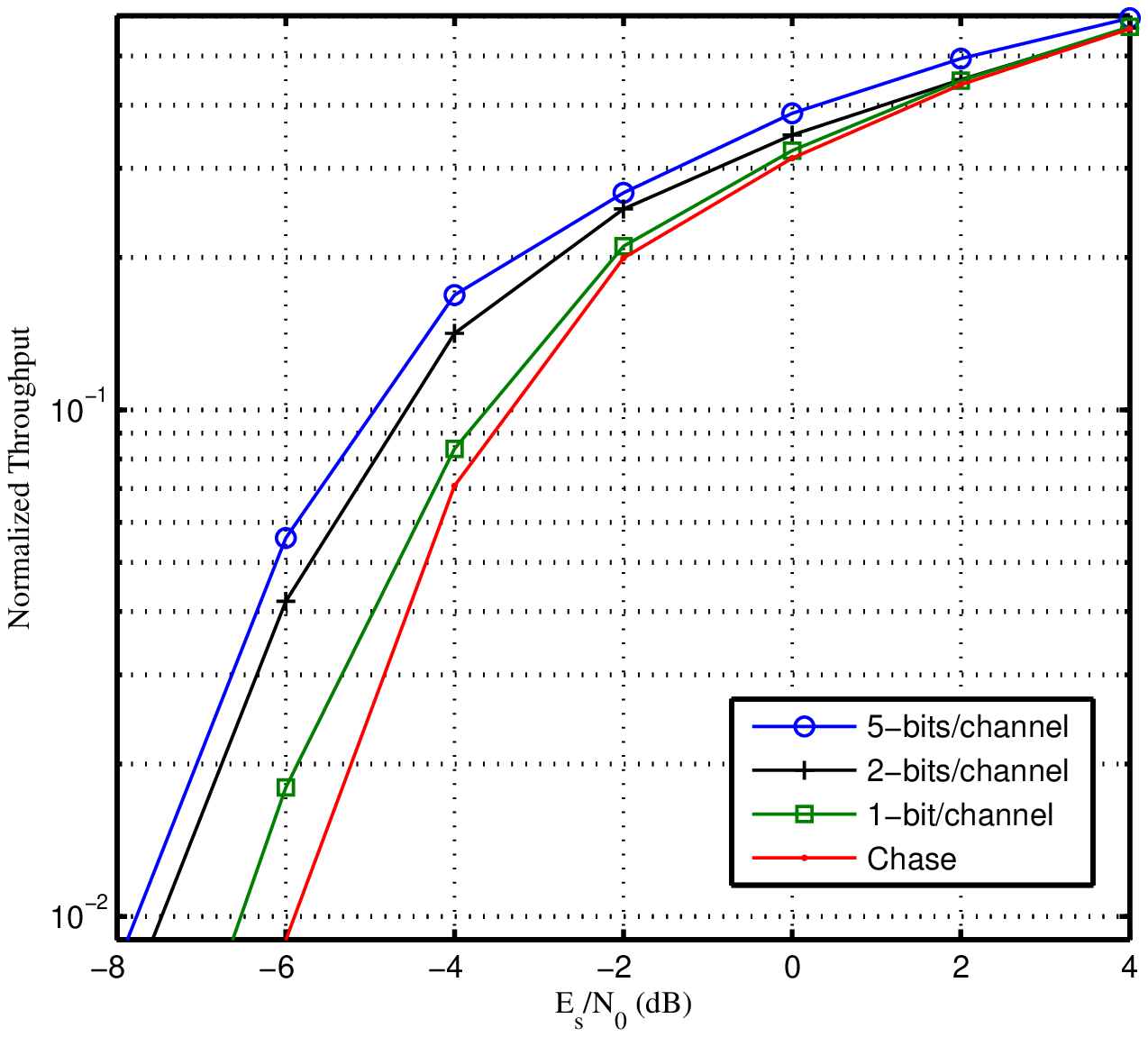}}
\caption{(a) Plot of variation of throughput for FPF and Chase combining with
channel SNR $\rho$ for $2 \times 2$ MIMO channel. The performance is compared
for QPSK, 16-QAM and 64-QAM constellations with $L_{\rm{info}}M = 2020$ bits.
(b) Plot of variation of throughput for PPF and Chase combining with channel SNR
$\rho$ for a SISO channel. The amount of PPF feedback is varied from $33\%$ to
$75\%$ of the total frame size. The performance is compared for 16-QAM and
64-QAM with $L_{\rm{info}} = 2020$ bits. (c) Plot of variation of throughput for
FPF with noisy feedback against Chase combining and HSDPA for a SISO channel.
The performance is compared for QPSK constellation with $L_{\rm{info}} = 3200$
bits. (d) Plot of variation of throughput for FPF with quantized COI feedback
against Chase combining for a SISO channel. The performance is compared for
16-QAM constellation with $L_{\rm{info}} = 2020$ bits.}
\label{plots}
\end{figure}

Fig. \ref{fpf_plot} compares the performance of FPF scheme with perfect COI
feedback against Chase combining for QPSK, 16-QAM, and 64-QAM constellations
over a $2 \times 2$ MIMO channel. The FEC code used for simulations is a 1/3
UMTS turbo code with eight decoding iterations. It is seen that most of the
gains from our proposed scheme are realized in the low SNR regime. The FPF
for QPSK displays gains of around 1 dB over Chase combining, and in 16-QAM, it gives an
improvement of about 2~dB over Chase combining. Furthermore, the gain increases
to 3~dB when the denser constellation of 64-QAM is chosen. It should be noted
that these gains have been realized directly at the packet level and not at the
bit level. This shows that with four retransmissions the power required at the
source can be halved with the inclusion of the proposed linear coding scheme.

In Fig. \ref{ppf_plot}, we plot the normalized throughput for PPF-PC for perfect
COI feedback against traditional Chase combining scheme for 16-QAM and 64-QAM
over a SISO channel. The amount of COI feedback symbols from the destination to
the source is varied from $33\%$ to $75\%$ of the total feedforward packet size.
For 64-QAM with a 1/3 UMTS turbo code and $L_{\rm info} = 2020$, the length of
the packet is $L = 1010$ symbols. Therefore the number of feedback symbols $T$
for 64-QAM is varied from $T = 337$ to $T = 757$. Again we can see the
improvements for 16-QAM and 64-QAM. Although the gains are smaller than the ones
for full packet feedback, they are still significant. It is actually interesting
to note that in 16-QAM most of the improvement in performance is reached with
only $50\%$ of COI feedback information. With $33\%$ COI feedback, PPF scheme
still shows an improvement of 1 dB over Chase for 16-QAM and a substantial
improvement of 2 dB for 64-QAM constellation.

Fig. \ref{hsdpa_plot} compares the normalized throughput for FPF with noisy COI
feedback against Chase combining and the scheme in~\cite{HSDPA}. It is seen that
even with a noise of $\sigma^2 = 0.25$ on the channel output feedback channel,
we see an improvement of about $0.5$~dB for the linear feedback scheme over the
scheme in \cite{HSDPA} in the low SNR regime. Furthermore this gain is realized
with only the addition of a very low complexity linear coder at source and
destination.

Finally, Fig. \ref{quantized_plot} compares the normalized throughput for FPF
with quantized COI feedback for 16-QAM constellation. At the destination, each
of the components -- inphase and quardature phase, are quantized using certain
number of bits and then mapped to the QAM constellation. It is seen that even
with 1-bit of quantization per phase leads to improvement in the performance
over the conventional Chase combining. However, most of the gains are obtained for
5-bits per phase of quantization.

\section{Conclusions}
In this paper, we have investigated a new hybrid-ARQ scheme that utilizes COI
feedback side-information from the destination.  This is motivated by trying to
close the performance gap between Chase combining and incremental redundancy
using feedback in order to leverage the implementation savings of a Chase
combining system \cite{Pal}. In normal Chase combining, packets are combined
using maximum ratio combining~(MRC); however, the proposed scheme incorporates
feedback by combining the packets using a linear feedback code for fading
channels with noisy feedback. Note that this also includes a new encoding step.
It was shown through Monte Carlo simulations that the post-processed SNR
performance of the linear feedback scheme greatly outperforms that of regular
MRC. In addition, since the code is built on linear operations, it adds little
complexity to the overall packet encoder and decoder assuming feedback side
information is present. The full hybrid-ARQ scheme was analyzed using two main
modes of operation: full packet feedback (FPF) in which the source was assumed
to have access to a noiseless/noisy version of the last received packet and
partial packet feedback (PPF) in which only a subset of the received symbol are
fed back to the source. Simulations show that the addition of feedback to
hybrid-ARQ greatly increases the performance and outperforms incremental
redundancy in most cases.

\begin{appendix}
\subsection{Proof of Lemma \ref{lem2}}
\begin{proof}
The encoding for perfect COI feedback using (\ref{betaeq}) can be written for
each $x[k]$ as
\begin{equation}
\label{enc1} x[k + 1] =  \phi^{-1}[k]e[k],
\end{equation}
where
\begin{eqnarray}
\label{enc3}e[k] & = & \theta -
\widehat{\theta}[k].
\end{eqnarray}
The operations at the decoder side can also be given by
\begin{eqnarray}
\label{dec1}\widehat{x}[k + 1] & = & \left(1 + \rho h^*[k + 1]h[k +
1]\right)^{-1}\rho h^*[k + 1]y[k + 1],\\
\label{dec2}\widehat{e}[k] & = &
\phi[k]\widehat{x}[k + 1],\\
\widehat{{\theta}}[k + 1] & = &
\label{dec3}\widehat{{\theta}}[k] + \widehat{e}[k].
\end{eqnarray}
For initialization purposes, it is assumed that $\widehat{\theta}[0] = 0$.  It can be seen from (\ref{enc3}) and (\ref{dec3}), that the error,
$e[k]$, for the symbol $\theta$ satisfies the relation
\begin{equation}
e[k + 1] = e[k] - \widehat{e}[k]. \label{errvec}
\end{equation}
Then, implementing (\ref{enc1}) and (\ref{errvec}), we can rewrite
$x[k+1]$ as
\begin{eqnarray*}
x[k + 1] & = & \phi^{-1}[k]\left(e[k - 1] - \widehat{e}[k -
1]\right)\\
& = & \phi^{-1}[k]\left(\phi[k - 1]x[k] -
\phi[k - 1]\widehat{x}[k]\right)\\
& = & \left(1 + \rho |h[k]|^2\right)^{1/2}(x[k] - \widehat{x}[k])\\
& = & \left(1 + \rho |h[k]|^2\right)^{-1/2}\left(x[k] -
\rho h^*[k]z[k]\right).
\end{eqnarray*}

According to (\ref{enc1}),
\begin{eqnarray*}
\theta - \widehat{{\theta}}[k]
& = & \phi[k]x[k + 1]\\
& = & \phi[k]\left(1 + \rho |h[k]|^2\right)^{-1/2}\left(x[k] -
\rho h^*[k]z[k]\right)\\
& = & \phi[k]\left(1 + \rho |h[k]|^2\right)^{-1/2}x[k] -
\rho\phi[k]\left(1 +
\rho |h[k]|^2\right)^{-1/2}h^*[k]z[k]\\
& = & |\phi[k]|^2\theta -
\rho|\phi[k]|^2\sum_{m =
1}^{k}\phi^{-1}[m - 1]h^*[m]z[m].
\end{eqnarray*}
Therefore, it follows that
\begin{equation*}
\widehat{{\theta}}[k] = \left(1 -
|\phi[k]|^2 \right)\theta + \rho
|\phi[k]|^2\sum_{m =
1}^{k}\phi^{-1}[m - 1]h^*[m]z[m].
\end{equation*}
\end{proof}
\subsection{Proof of Lemma \ref{lem4}}
\begin{proof}
We present the proof for $L = 1$. The generalization of it immediately follows.
For the $i^{th}$ spatial channel, we select the symbol $\theta_i$ from a square
QAM constellation consisting of $M_i[N] = 2^{NR_i}$ symbols. According to the
recursive definition in Lemma 2, the $i^{th}$ spatial signal is given as
\begin{equation}\label{l31}
\widetilde{x}_i[k] = \prod_{l = 1}^{k - 1}\frac{1}{\sqrt{1 + \rho
\widetilde{\lambda}^2_i[l]}}\theta_i - \rho\sum_{m = 1}^{k - 1}\left(\prod_{l =
m}^{k - 1} \frac{1}{\sqrt{1 +  \rho \widetilde{\lambda}^2_i[l]}}\right)
\widetilde{\lambda}_i[m] \widetilde{z}_i[m].
\end{equation}
Let
\begin{equation}\label{phi2}
\widetilde{\phi}_i[k] = \prod_{l = 1}^{k}\frac{1}{\sqrt{1 + \rho
\widetilde{\lambda}^2_i[l]}}, \quad \widetilde{\phi}_i[0] = 1.
\end{equation}

Now (\ref{l31}) can be rewritten as
\begin{equation}\label{l33}
\widetilde{x}_i[k] = \widetilde{\phi}_i[k - 1]\left( \theta_i - \rho\sum_{m =
1}^{k - 1}\frac{\widetilde{\lambda}_i[m] \widetilde{z}_i[m]}{\widetilde{\phi}_i[m - 1]}
\right).
\end{equation}
Based on (\ref{theta_ub}) which describes the unbiased estimation algorithm at
the destination,
\begin{equation}
\widehat{\theta}^u_i[N] =  \theta_i +
\rho \frac{\widetilde{\phi}_i^2[N]}{1 - \widetilde{\phi}_i^2[N]}\sum_{k =
1}^{N}\frac{\widetilde{\lambda}_i[k]}{\widetilde{\phi}_i[k -
1]}\widetilde{z}_i[k].
\end{equation}

Let
\begin{eqnarray}
\nonumber e^u_i[N] & = & \widehat{\theta}_i[N] - \theta_i.
\end{eqnarray}
Given channel realizations over blocklength $N$, $\{\bH
[k]\}_{k = 1}^N$, and a known $\theta_i$, the random variable $e^u_i[N]$ is just
a complex Gaussian random variable with conditional mean
\begin{eqnarray*}
\nonumber E \left[e^u_i[N] \Big{|} \{\bH[k]\}_{k = 1}^N,\theta_i\right] & = &
E\left[\rho \frac{\widetilde{\phi}_i^2[N]}{1 -
\widetilde{\phi}_i^2[N]}\sum_{k =
1}^{N}\frac{\widetilde{\lambda}_i[k]}{\widetilde{\phi}_i[k -
1]}\widetilde{z}_i[k] \Big{|}\{\bH[k]\}_{k = 1}^N, \theta_i\right]\\ & =
& 0.
\end{eqnarray*}
Similarly for the variance of $e^u_i[N]$, we obtain
\begin{eqnarray}
\nonumber \textrm{Var}\left(e^u_i[N]\Big{|}\{\bH[k]\}_{k = 1}^N,\theta_i\right) & = &
\textrm{Var}\left(\rho \frac{\widetilde{\phi}_i^2[N]}{1 -
\widetilde{\phi}_i^2[N]}\sum_{k =
1}^{N}\frac{\widetilde{\lambda}_i[k]}{\widetilde{\phi}_i[k -
1]}\widetilde{z}_i[k] \Big{|}\{\bH[k]\}_{k = 1}^N, \theta_i\right)\\ & = & \rho^2
\frac{\widetilde{\phi}_i^4[N]}{(1 - \widetilde{\phi}_i^2[N])^2} \sum_{k =
1}^{N}\frac{\widetilde{\lambda}^2_i[k]}{\widetilde{\phi}^2_i[k - 1]}.
\end{eqnarray}

The symbol $\theta_i$ is drawn from a square QAM constellation $\varTheta_i[N]$
given by,
\begin{equation}\label{m1}
\varTheta_i[N] = \sqrt{\alpha_i[N]}\left\{\pm 1
\pm 1j, \pm 1 \pm 3j,\cdots \cdots, \pm\left(\sqrt{M_i[N]}
- 1\right) \pm \left(\sqrt{M_i[N]}
- 1\right)j\right\},
\end{equation}
where the scaling factor $\alpha_i[N]$ satisfies the power constraint at the
source
\begin{equation}
E[|\theta_i|^2] = \frac{2}{3}(M_i[N] - 1)\alpha_i[N] = \rho.
\end{equation}

A correct decision about $\theta_i$ is made by the destination if the
error $e^u_i[N]$ falls within the square~($\Box_i[N]$) of length
$2\sqrt{\alpha_i[N]}$. Let 
\begin{equation*}
P_e\left(\{\bH [k]\}_{k = 1}^N, \theta_i\right) = P\left(e^u_i[N] \notin
\Box_i[N]\Big{|}\{\bH [k]\}_{k = 1}^N, \theta_i\right).
\end{equation*}
Clearly,
\begin{eqnarray*}
\nonumber P_e\left(\{\bH [k]\}_{k = 1}^N, \theta_i\right) & \leq &
P\left(|\mathfrak{Re}(e^u_i[N])| > \sqrt{\alpha_i[N]}\Big{|}\{\bH [k]\}_{k = 1}^N,
\theta_i\right) +\\ & &
P\left(|\mathfrak{Im}(e^u_i[N])| > \sqrt{\alpha_i[N]}\Big{|}\{\bH [k]\}_{k = 1}^N,
\theta_i\right),
\end{eqnarray*}
where $\mathfrak{Re}(e^u_i[N])$ and $\mathfrak{Im}(e^u_i[N])$ denote the real
and imaginary part of $e^u_i[N]$ respectively. Using the identical distribution
of the real and imaginary components of the error $e^u_i[N]$, we get
\begin{equation*}
\nonumber P_e\left(\{\bH [k]\}_{k = 1}^N, \theta_i\right)
\leq
4Q\left(\sqrt{\frac{{\alpha_i[N]}}{{\textrm{Var}\left(\mathfrak{Re}(e^u_i[N])\Big{|}
\{\bH [k]\}_{k = 1}^N,\theta_i\right)}}}\right).
\end{equation*}
Clearly,
\begin{equation*}
\textrm{Var}\left(\mathfrak{Re}(e^u_i[N])\Big{|}\{\bH [k]\}_{k = 1}^N, \theta_i
\right) = \frac{1}{2}\textrm{Var}\left(e^u_i[N]\Big{|}\{\bH [k]\}_{k = 1}^N,
\theta_i\right).
\end{equation*}

Therefore,
\begin{eqnarray}
\nonumber P_e\left(\{\bH [k]\}_{k = 1}^N, \theta_i\right) & \leq &
4Q\left(\sqrt{\frac{3\left(1 -
\widetilde{\phi}_i^2[N]\right)^2}{(M_i[N] - 1)\rho
\widetilde{\phi}_i^4[N]\sum_{k =
1}^{N}\frac{\widetilde{\lambda}^2_i[k]}{\widetilde{\phi}^2_i[k - 1]}}}\right)\\
& = & 4Q(\sqrt{a_i[N]}),
\end{eqnarray}
where
\begin{equation}\label{a1}
a_i[N] = {\frac{3\left(1 -
\widetilde{\phi}_i^2[N]\right)^2}{(M_i[N] - 1)\rho
\widetilde{\phi}_i^4[N]\sum_{k =
1}^{N}\frac{\widetilde{\lambda}^2_i[k]}{\widetilde{\phi}^2_i[k - 1]}}}.
\end{equation}
We next show that with probability 1, $a_i[N]$ increases
at least exponentially with $N$. From the definition of $\widetilde{\phi}_i[N]$
in (\ref{phi2}) we have $0 \leq \widetilde{\phi}_i[N] \leq
1,N \geq 0$. Also the definition implies that the sequence $\{
\widetilde{\phi}_i[N]\}_{N = 0}^{\infty}$ is a monotonically decreasing
sequence for arbitrary channel matrices. Hence by Theorem $3.14$
in~\cite{rudin}, the sequence $\{ \widetilde{\phi}_i[N]\}_{N = 0}^{\infty}$
converges. Also,
\begin{eqnarray}
E[\log_2\widetilde{\phi}_i[N]] & = & -\frac{1}{2}\sum_{k = 1}^N E \left[\log_2
\left(1 + \rho \widetilde{\lambda}_i^2[k]\right)\right]\\
\label{g2} & = & -\frac{N}{2}C_i
\end{eqnarray}

Using (\ref{g2}) and the strong law of large numbers~(SLLN), we know that
for any given $\epsilon > 0, \exists N_1$ such that
\begin{equation*}
P\left(\left|\frac{1}{N} \log_2 \widetilde{\phi}_i[N] +
\frac{1}{2}C_i \right| < \frac{\epsilon}{2}C_i\right) = 1 \quad
\forall N > N_1.
\end{equation*}

\noindent In particular,
\begin{equation}\label{b1}
P\left(\widetilde{\phi}_i[N] < 2^{-\frac{1}{2}N(1 -
\epsilon)C_i}\right)  =   1 \quad \forall N > N_1.
\end{equation}
By the almost sure convergence of $\{\widetilde{\phi}_i[N]\}_{N =
0}^{\infty}$ to zero, we can choose $N_2$ such that
\begin{equation}\label{b2}
P\left(1 - \widetilde{\phi}_i^2[N] > \frac{1}{\sqrt{3}}\right)  =   1 \quad
\forall N > N_2.
\end{equation}
Using SLLN again, we obtain that for a given $\epsilon > 0, \exists N_3$
such that
\begin{equation}\label{b3}
P\left(\sum_{k = 1}^N \widetilde{\lambda}^2_i[k] < \eta_i N(1 + \epsilon)
\right)  = 1 \quad \forall N > N_3,
\end{equation}
where $\eta_i = E[\widetilde{\lambda}^2_i[k]]$.
Substituting the bounds given by (\ref{b1}),\thinspace(\ref{b2}) and
(\ref{b3}) into the expression of $a_i[N]$ in (\ref{a1}), we obtain that
$\forall N > \max\{N_1, N_2, N_3\}$ with probability 1,
\begin{eqnarray*}
\nonumber a_i[N] & \geq &
\frac{1}{\rho}\frac{3\left(\frac{1}{\sqrt{3}}\right)^2}{2^{NR_i}2^{-N(1 -
\epsilon)C_i}\eta_i N(1 + \epsilon)}\\
& = & \frac{1}{\rho}\frac{2^{N\left((1 -
\epsilon)C_i - R_i\right)}}{\eta_i N(1 + \epsilon)}.
\end{eqnarray*}
The positive value $\epsilon$ also satisfies the inequality,
\begin{equation*}
\rho \eta_i N(1 + \epsilon) \leq 2^{\epsilon N C_i}, \quad \forall N >
N_4.
\end{equation*}
Clearly it follows that $\forall N > N_{\rm max}$
\begin{equation*}
a_i[N] \geq 2^{N\left((1 - 2\epsilon)C_i - R_i\right)},
\end{equation*}
where $N_{\rm max} = \max\{N_1, N_2, N_3, N_4\}$.

Thus, we have shown that with probability one, the input parameter of the
$Q$-function increases exponentially. Furthermore it is very well known that $Q$-function
decays exponentially and can be bounded by,
\begin{equation*}
Q(x) \leq \frac{1}{2}e^{-x^2/2}, \quad \forall x \geq 0.
\end{equation*}
From the above two equations we immediately obtain,
\begin{equation*}
P_e\left(\{\bH [k]\}_{k = 1}^N, \theta_i\right) \leq 2e^{-\frac{1}{2}2^{N\left((1 - 2\epsilon)C_i - R_i\right)}}.
\end{equation*}
Note that we can choose $\epsilon$ arbitrarily. Picking $\epsilon <
\displaystyle\min_{i = 1,2,\ldots,M}\textstyle\frac{1}{2}\left(1 - \frac{R_i}{C_i}\right)$ guarantees that the decay is
doubly exponential.
\end{proof}
\end{appendix}
\bibliographystyle{IEEEtran}
\bibliography{ARQ}

\end{document}